\documentclass[english]{article}
\usepackage[T1]{fontenc}
\usepackage[latin9]{inputenc}
\pagestyle{headings}
\usepackage{xcolor}
\usepackage{babel}
\usepackage{array}
\usepackage{float}
\usepackage{amsmath}
\usepackage{amssymb}
\usepackage{stmaryrd}
\usepackage{stackrel}
\usepackage{graphicx}
\usepackage[authoryear]{natbib}
\PassOptionsToPackage{normalem}{ulem}
\usepackage{ulem}
\usepackage[unicode=true,
 bookmarks=true,bookmarksnumbered=false,bookmarksopen=false,
 breaklinks=false,pdfborder={0 0 1},backref=false,colorlinks=false]
 {hyperref}
\hypersetup{pdftitle={\@shorttitle},
 pdfauthor={\@shortauthor},
 pdfsubject={\@Subject},
 pdfkeywords={Computer Graphics Forum, EUROGRAPHICS},
   colorlinks,linkcolor=blue,citecolor=blue,urlcolor=blue,    bookmarks=false,    pdfpagemode=UseNone}
\usepackage{breakurl}

\makeatletter

\providecommand{\tabularnewline}{\\}
\providecolor{lyxadded}{rgb}{0,0,1}
\providecolor{lyxdeleted}{rgb}{1,0,0}

\DeclareRobustCommand{\lyxsout}[1]{\ifx\\#1\else\sout{#1}\fi}

\usepackage{tikz,tkz-tab,tkz-graph}
\usepackage{amsthm}
\usepackage{tikz}
\usepackage{dsfont}

\pagenumbering{arabic} 
\pagestyle{headings}

\newtheorem{prop}{Proposition}
\newtheorem{corol}{Corollary}
\newtheorem{defin}{Definition}
\newtheorem{theo}{Theorem}
\newtheorem{lemma}{Lemma}

\newtheorem{gage}{Guarantee}
\newtheorem{expectation}{Expectation}
\theoremstyle{plain}
\newtheorem{rmk}{Remark}

\usetikzlibrary{arrows,automata}
\usetikzlibrary{positioning, fadings}
\tikzset{
    state/.style={
           rectangle,
  fill=#1!5!white,
           rounded corners,
           draw=#1, very thick,
           minimum height=2em,
           inner sep=2pt,
           text centered,
           },
coeff/.style={
           circle,
           draw=black, very thick,
           minimum height=2em,
           inner sep=2pt,
           text centered,
           },	
}

\makeatother

\begin{document}

\title{Adjacency and Tensor Representation in General Hypergraphs{\normalsize{}}\\
{\normalsize{}Part 1: e-adjacency Tensor Uniformisation Using Homogeneous
Polynomials }}

\author{Xavier Ouvrard\textsuperscript{1,2}\quad{}Jean-Marie Le Goff\textsuperscript{1}\quad{}Stéphane
Marchand-Maillet\textsuperscript{2}\\
\\
1~:~CERN\qquad{}2~:~University of Geneva\\
{\small{}\{xavier.ouvrard\}@cern.ch}}
\maketitle
\begin{abstract}
Adjacency between two vertices in graphs or hypergraphs is a pairwise
relationship. It is redefined in this article as 2-adjacency. In general
hypergraphs, hyperedges hold for $n$-adic relationship. To keep the
$n$-adic relationship the concepts of k-adjacency and e-adjacency
are defined. In graphs 2-adjacency and e-adjacency concepts match,
just as $k$-adjacency and e-adjacency do for $k$-uniform hypergraphs.
For general hypergraphs these concepts are different. This paper also
contributes in a uniformization process of a general hypergraph to
allow the definition of an e-adjacency tensor, viewed as a hypermatrix,
reflecting the general hypergraph structure. This symmetric e-adjacency
hypermatrix allows to capture not only the degree of the vertices
and the cardinality of the hyperedges but also makes a full separation
of the different layers of a hypergraph.
\end{abstract}

\section{Introduction}

Hypergraphs were introduced in 1973 \citet{berge1973graphs}. Hypergraphs
have applications in many domains such as VLSI design, biology or
collaboration networks. Edges of a graph allow to connect only two
vertices where hyperedges of hypergraphs allow multiple vertices to
be connected. Recent improvements in tensor spectral theory have made
the research on the spectra of hypergraphs more relevant. For studying
such spectra a proper definition of general hypergraph Laplacian tensor
is needed and therefore the concept of adjacency has to be clearly
defined and consequently an (-as it will be defined later- e-)adjacency
tensor must be properly defined.

In \citet{pu2013relational} a clear distinction is made between the
pairwise relationship which is a binary relation and the co-occurrence
relationship which is presented as the extension of the pairwise relationship
to a $p$-adic relationship. The notion of co-occurrence is often
used in linguistic data as the simultaneous appearance of linguistic
units in a reference. The co-occurence concept can be easily extended
to vertices contained in a hyperedge: we designate it in hypergraphs
by the term e-adjacency.

Nonetheless it is more than an extension. Graph edges allow to connect
vertices by pair: graph adjacency concept is clearly a pairwise relationship.
At the same time in an edge only two vertices are linked. Also given
an edge only two vertices can be e-adjacent. Thus adjacency and e-adjacency
are equivalent in graphs.

Extending to hypergraphs the adjacency notion two vertices are said
adjacent if it exists a hyperedge that connect them. Hence the adjacency
notion still captures a binary relationship and can be modeled by
an adjacency matrix. But e-adjacency is no more a pairwise relationship
as a hyperedge being given more than two vertices can occur since
a hyperedge contains $p\geqslant1$ vertices. Therefore it is a $p$-adic
relationship that has to be captured and to be modeled by tensor.
Consequently adjacency matrix of a hypergraph and e-adjacency tensor
are two separated notions. Nonetheless the e-adjacency tensor if often
abusively named the adjacency tensor in the literature.

This article contributions are: 1. the definition of proper adjacency
concept in general hypergraphs; 2. a process to achieve the transformation
of a general hypergraph into a uniform hypergraph called uniformization
process; 3. the definition of a new (e-)adjacency tensor which not
solely preserves all the structural information of the hypergraph
but also captures separately the information on the hyperedges held
in the hypergraph.

After sketching the background and the related works on the adjacency
and e-adjacency concepts for hypergraphs in Section \ref{sec:Background-and-related},
one proposal is made to build a new e-adjacency tensor which is built
as unnormalized in Section \ref{sec:Towards-an-unnormalized}. Section
\ref{sec:Some-comments-on} tackles the particular case of graphs
seen as 2-uniform hypergraphs and the link with DNF. Future works
and Conclusion are addressed in Section \ref{sec:Future-work-and}.
A full example is given in Appendix A.

\subsubsection*{Notation}

Exponents are indicated into parenthesis - for instance $y^{(n)}$
- when they refer to the order of the corresponding tensor. Indices
are written into parenthesis when they refer of a sequence of objects
- for instance $a_{(k)ij}$ is the elements at row $i$ and column
$j$ of the matrix $A_{(k)}$ -. The context should made it clear.

For the convenience of readability, it is written $\boldsymbol{z}_{0}$
for $z^{1},...,z^{n}$. Hence given a polynomial $P$, $P\left(\boldsymbol{z}_{0}\right)$
has to be understood as $P\left(z^{1},...,z^{n}\right)$.

Given additional variables $y^{1},...,y^{k}$, it is written $\boldsymbol{z}_{k}$
for $z^{1},...,z^{n},y^{1},...,y^{k}$.

$\mathcal{S}_{k}$ is the set of permutations on the set $\left\{ i\,:\,i\in\mathbb{N}\land1\leqslant i\leqslant k\right\} $.

\section{Background and related works}

\label{sec:Background-and-related}

Several definitions of hypergraphs exist and are reminded in \citet{ouvrard2018hypergraph_survey}.
Hypergraphs allow the preservation of the $p$-adic relationship in
between vertices becoming the natural modeling of collaboration networks,
co-author networks, chemical reactions, genome and all situations
where the 2-adic relationship allowed by graphs is not sufficient
and where the keeping of the grouping information is important. Among
the existing definitions the one of \citet{bretto2013hypergraph}
is reminded:

\begin{defin}

An \textbf{(undirected)} \textbf{hypergraph} $\mathcal{H}=\left(V,E\right)$
on a finite set of $n$ vertices (or vertices) $V=\left\{ v_{1}\,,\,v_{2},\,...\,,\,v_{n}\right\} $
is defined as a family of $p$ \textbf{hyperedges} $E=\left\{ e_{1},e_{2},...,e_{p}\right\} $
where each hyperedge is a non-empty subset of $V$.

Let $\mathcal{H}=\left(V,E\right)$ be a hypergraph and $w$ a relation
such that each hyperedge $e\in E$ is mapped to a real number $w(e)$.
The hypergraph $\mathcal{H}_{w}=\left(V,E,w\right)$ is said to be
a \textbf{weighted hypergraph}.

The \textbf{$2$-section} of a hypergraph $\mathcal{H}=\left(V,E\right)$
is the graph $\left[\mathcal{H}\right]_{2}=\left(V,E'\right)$ such
that:
\[
\forall u\in V,\forall v\in V\,:\,(u,v)\in E'\Leftrightarrow\exists e\in E\,:\,u\in e\land v\in e
\]

Let $k\in\mathbb{N}^{*}$. a hypergraph is said to be \textbf{$k-$uniform}
if all its hyperedges have the same cardinality $k$. 

A \textbf{directed hypergraph} $\mathcal{H}=\left(V,E\right)$ on
a finite set of $n$ vertices (or vertices) $V=\left\{ v_{1}\,,\,v_{2},\,...\,,\,v_{n}\right\} $
is defined as a family of $p$ \textbf{hyperedges} $E=\left\{ e_{1},e_{2},...,e_{p}\right\} $
where each hyperedge contains exactly two non-empty subset of $V$,
one which is called the \textbf{source} - written $e_{s\,i}$ - and
the other one which is the \textbf{target} - written $e_{t\,i}$ -.

\end{defin}

In this article only undirected hypergraphs will be considered. In
a hypergraph a hyperedge links one or more vertices together. The
role of the hyperedges in hypergraphs is playing the role of edges
in graphs.

\begin{defin}

Let $\mathcal{H}=\left(V,E\right)$ be a hypergraph.

The \textbf{degree} of a vertex is the number of hyperedges it belongs
to. For a vertex $v_{i}$, it is written $d_{i}$ or $\deg\left(v_{i}\right)$.
It holds: $d_{i}=\left|\left\{ e\,:\,v_{i}\in e\right\} \right|$

\end{defin}

Given a hypergraph the incident matrix of an undirected hypergraph
is defined as follow:

\begin{defin}

The \textbf{incidence matrix} of a hypergraph is the rectangular matrix
$H=\left[h_{kl}\right]_{\substack{1\leqslant k\leqslant n\\
1\leqslant l\leqslant p
}
}$ of $M_{n\times p}\left(\left\{ 0\,;\,1\right\} \right)$, where $h_{kl}=\begin{cases}
1 & \text{if\,}\,\,v_{k}\in e_{l}\\
0 & \text{otherwise}
\end{cases}$.

\end{defin}

As seen in the introduction defining adjacency in a hypergraph has
to be distinguished from the e-adjacency in a hyperedge of a hypergraph.

\begin{defin}

Let $\mathcal{H}=\left(V,E\right)$ be a hypergraph. Let $u\in V$
and $v\in V$ be two vertices of this hypergraph.

$u$ and $v$ are said \textbf{adjacent} if it exists $e\in E$ such
that $u\in e$ and $v\in e$.

\end{defin}

\begin{defin}

Let $\mathcal{H}=\left(V,E\right)$ be a hypergraph. Let $k\geqslant1$
be an integer, $j\in\left\llbracket 1;k\right\rrbracket $, $i_{j}\in\left\llbracket 1;n\right\rrbracket $.
For $j\in\left\llbracket 1;k\right\rrbracket $, let $u_{i_{j}}\in V$
be $k$ vertices.

Then $u_{i_{1}}$,...,$u_{i_{k}}$ are said \textbf{$k$-adjacent}
if it exists $e\in E$ such that for all $j\in\left\llbracket 1;k\right\rrbracket $,
$u_{i_{j}}\in e$.

\end{defin}

With $k=2$, the usual notion of adjacency is retrieved.

If $k$ vertices are $k$-adjacent then each subset of this $k$ vertices
of size $l\leqslant k$ is $l$-adjacent.

\begin{defin}

Let $\mathcal{H}=\left(V,E\right)$ be a hypergraph. Let $e\in E$.

The vertices constituting $e$ are said \textbf{e-adjacent} vertices.

\end{defin}

If $\mathcal{H}$ is $k$-uniform then the $k$-adjacency is equivalent
to the e-adjacency of vertices in an edge.

For a general hypergraph, vertices that are $k$-adjacent with $k<\underset{e\in E}{\max}\left|e\right|$
have to co-occur - potentially with other vertices - in one edge.
In this case the notions of $k$-adjacency and of e-adjacency are
actually distinct.

\subsubsection*{Adjacency matrix}

The adjacency matrix of a hypergraph is related with the 2-adjacency.
Several approaches have been made to define an adjacency matrix for
hypergraphs.

In \citet{bretto2013hypergraph} the adjacency matrix is defined as:

\begin{defin}

The \textbf{adjacency matrix} is the square matrix which rows and
columns are indexed by the vertices of $\mathcal{H}$ and where for
all $u,v\in V$, $u\neq v$: $a_{uv}=\left|\left\{ e\in E\,:\,u,v\in e\right\} \right|$
and $a_{uu}=0$.

\end{defin}

The adjacency matrix is defined in \citet{zhou2007learning} as follow:

\begin{defin}

Let $\mathcal{H}_{w}=\left(V,E,w\right)$ be a weighted hypergraph.

The adjacency matrix of $\mathcal{H}_{w}$ is the matrix $A$ of size
$n\times n$ defined as 
\[
A=HWH^{T}-D_{v}
\]
 where $W$ is the diagonal matrix of size $p\times p$ containing
the weights of the hyperedges of $\mathcal{H}_{w}$ and $D_{v}$ is
the diagonal matrix of size $n\times n$ containing the degrees of
the vertices of $\mathcal{H}_{w}$, where $d(v)=\sum\limits _{\left\{ e\in E\,:v\in e\right\} }w(e)$
for all $v\in V$.

\end{defin}

This last definition is equivalent to the one of Bretto for unweighted
hypergraphs - ie weighted hypergraphs where the weight of all hyperedges
is 1.

The problem of the matrix approach is that the multi-adic relationship
is no longer kept as an adjacency matrix can link only pair of vertices.
Somehow it doesn't preserve the structure of the hypergraph: the hypergraph
is extended in the 2-section of the hypergraph and the information
is captured by this way. 

Following a lemma cited in \citet{dewar2016connectivity}, it can
be formulated:

\begin{lemma}

Let $\mathcal{H}=\left(V,E\right)$ be a hypergraph and let $u,v\in V$.
If two vertices $u$ and $v$ are adjacent in $\mathcal{H}$ then
they are adjacent in the 2-section $\left[\mathcal{H}\right]_{2}$.

\end{lemma}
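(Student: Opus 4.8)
The plan is to prove the statement by directly unfolding the relevant definitions, since the hypothesis and the conclusion are phrased through the very same existential condition. First I would assume that $u$ and $v$ are adjacent in $\mathcal{H}=\left(V,E\right)$; by the definition of adjacency in a hypergraph, this means there exists a hyperedge $e\in E$ such that $u\in e$ and $v\in e$.

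Next I would invoke the definition of the $2$-section $\left[\mathcal{H}\right]_{2}=\left(V,E'\right)$, which stipulates that $(u,v)\in E'$ holds if and only if there exists $e\in E$ with $u\in e\land v\in e$. Since the witnessing hyperedge $e$ obtained in the first step satisfies exactly this condition, I would conclude that $(u,v)\in E'$, that is, $u$ and $v$ are adjacent in the graph $\left[\mathcal{H}\right]_{2}$.

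The argument is essentially tautological: the edge relation of the $2$-section is defined by the very same co-occurrence predicate that defines adjacency in the hypergraph, so there is no genuine obstacle to overcome. The only point requiring mild care is that $\left[\mathcal{H}\right]_{2}$ is a graph and therefore adjacency in it is interpreted as membership of the pair $(u,v)$ in $E'$; this matches the symmetric existential clause $u\in e\land v\in e$ verbatim, so the implication follows without any case analysis or additional hypotheses on $u$ and $v$. One may also remark that, by construction, the converse holds as well, so the two notions coincide; only the stated direction is needed here.
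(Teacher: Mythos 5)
Your proof is correct: adjacency in $\mathcal{H}$ and the edge relation of $\left[\mathcal{H}\right]_{2}$ are defined by the very same existential condition, so the implication follows by passing the witnessing hyperedge from one definition to the other, exactly as you do. Note that the paper itself gives no proof at all --- the lemma is quoted from a cited reference and stated without argument --- so your definitional unfolding is precisely the intended (and essentially the only possible) argument.

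One caution about your closing aside. You remark that ``by construction, the converse holds as well, so the two notions coincide,'' whereas the paper, in the sentence immediately following the lemma, asserts the opposite: ``The reciprocal doesn't hold as it would imply an isomorphism between $\mathcal{H}$ and its 2-section.'' On the letter of the definitions you are right about the binary relation: $(u,v)\in E'$ holds if and only if some hyperedge of $\mathcal{H}$ contains both $u$ and $v$, so adjacency in $\left[\mathcal{H}\right]_{2}$ does imply adjacency in $\mathcal{H}$. What genuinely fails is the recovery of the hypergraph from its 2-section: for instance, a single hyperedge $\left\{ a,b,c\right\}$ and the three hyperedges $\left\{ a,b\right\} ,\left\{ b,c\right\} ,\left\{ a,c\right\}$ produce the same 2-section, so the coincidence of the pairwise adjacency relations does not make $\mathcal{H}$ and $\left[\mathcal{H}\right]_{2}$ isomorphic; the $n$-adic e-adjacency information the paper cares about is lost. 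Your proof is unaffected, but if you keep the aside you should phrase it so it does not appear to contradict the paper's remark: the two \emph{binary} adjacency relations coincide, while the hypergraph structures do not.
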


The reciprocal doesn't hold as it would imply an isomorphism between
$\mathcal{H}$ and its 2-section $\left[\mathcal{H}\right]_{2}$.

Moving to the approach by e-adjacency will allow to keep the information
on the structure that is held in the hypergraph.

\subsubsection*{e-adjacency tensor}

In \citet{michoel2012alignment} an unnormalized version of the $k-$adjacency
tensor of a $k$-uniform hypergraph is given. This definition is also
adopted in \citet{ghoshdastidar2017uniform}.

\begin{defin}

The unnormalized ({[}Author's note{]}: $k$-)adjacency tensor of a
$k$-uniform hypergraph $\mathcal{H}=\left(V,E\right)$ on a finite
set of vertices $V=\left\{ v_{1}\,,\,v_{2},\,...\,,\,v_{n}\right\} $
and a family of hyperedges $E=\left\{ e_{1},e_{2},...,e_{p}\right\} $
of equal cardinality $k$ is the tensor $\mathcal{A}_{\text{raw}}=\left(a_{\text{raw}\,i_{1}...i_{k}}\right)_{1\leqslant i_{1},...,i_{k}\leqslant n}$
such that: 
\[
a_{\text{raw\,}i_{1}...i_{k}}=\begin{cases}
1 & \text{if }\left\{ v_{i_{1}},...,v_{i_{k}}\right\} \in E\\
0 & \text{otherwise.}
\end{cases}
\]

\end{defin}

In \citet{cooper2012spectra} a slightly different version exists
for the definition of the adjacency tensor, called the degree normalized
$k$-adjacency tensor

\begin{defin}

The ({[}Author's note{]}: \textbf{degree normalized $k$-})\textbf{adjacency
tensor} of a $k$-uniform hypergraph $\mathcal{H}=\left(V,E\right)$
on a finite set of vertices $V=\left\{ v_{1}\,,\,v_{2},\,...\,,\,v_{n}\right\} $
and a family of hyperedges $E=\left\{ e_{1},e_{2},...,e_{p}\right\} $
of equal cardinality $k$ is the tensor $\mathcal{A}=\left(a_{i_{1}...i_{k}}\right)_{1\leqslant i_{1},...,i_{k}\leqslant n}$
such that: 
\[
a_{i_{1}...i_{k}}=\dfrac{1}{(k-1)!}\begin{cases}
1 & \text{if }\left\{ v_{i_{1}},...,v_{i_{k}}\right\} \in E\\
0 & \text{otherwise.}
\end{cases}
\]

\end{defin}

This definition by introducing the coefficient $\dfrac{1}{\left(k-1\right)!}$
allows to retrieve the degree of a vertex $i$ summing the elements
of index $i$ on the first mode of the tensor. Also it will be called
the degree normalized adjacency tensor.

\begin{prop}

Let $\mathcal{H}=\left(V,E\right)$ be a $k$-uniform hypergraph.
Let $v_{i}\in V$ be a vertex. It holds by considering the degree
normalized $k-$adjacency tensor $\mathcal{A}=\left(a_{i_{1}...i_{k}}\right)_{1\leqslant i_{1},...,i_{k}\leqslant n}$:
\[
\deg\left(v_{i}\right)=\sum\limits _{i_{2},...,i_{k}=1}^{n}a_{ii_{2}...i_{k}}.
\]

\end{prop}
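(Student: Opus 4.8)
The plan is to compute the claimed sum directly from the definition of the degree normalized $k$-adjacency tensor and show that it counts, with the correct multiplicity, exactly the hyperedges containing $v_{i}$. The key observation is that the entry $a_{ii_{2}\dots i_{k}}$ is nonzero precisely when the multiset of indices $\{i,i_{2},\dots,i_{k}\}$ forms a hyperedge of $\mathcal{H}$, and in that case its value is $\tfrac{1}{(k-1)!}$. Since $\mathcal{H}$ is $k$-uniform, each hyperedge has exactly $k$ distinct vertices, so a fixed hyperedge $e$ containing $v_{i}$ contributes to the sum once for every way of assigning the remaining $k-1$ distinct vertices of $e$ to the $k-1$ free index positions $i_{2},\dots,i_{k}$.

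First I would fix the vertex $v_{i}$ and partition the summation over $(i_{2},\dots,i_{k})\in\llbracket 1;n\rrbracket^{k-1}$ according to which hyperedge (if any) the index set $\{i,i_{2},\dots,i_{k}\}$ equals. Second, for each hyperedge $e\ni v_{i}$, I would count the tuples $(i_{2},\dots,i_{k})$ for which $\{v_{i},v_{i_{2}},\dots,v_{i_{k}}\}=e$; because $e$ consists of $k$ distinct vertices one of which is $v_{i}$, these tuples correspond exactly to the orderings of the other $k-1$ vertices, of which there are $(k-1)!$. Third, I would multiply this count $(k-1)!$ by the per-entry value $\tfrac{1}{(k-1)!}$ to get a contribution of exactly $1$ per hyperedge through $v_{i}$, and finally sum over all such hyperedges to obtain $\bigl|\{e\in E:v_{i}\in e\}\bigr|=\deg(v_{i})$ by the definition of degree.

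The one step deserving care — and the only place a subtlety hides — is the bijection between index tuples and orderings in the counting step, which relies crucially on $k$-uniformity guaranteeing $k$ \emph{distinct} vertices per hyperedge. If repeated indices were allowed (i.e. if a hyperedge could be a genuine multiset, or if one allowed $i_{j}=i$ for some $j\geqslant 2$), the factor would not be the clean $(k-1)!$ and the normalization would fail; so I would make explicit that every appearance of a vertex in a hyperedge of a simple $k$-uniform hypergraph is with multiplicity one, which is what makes the count of orderings precisely $(k-1)!=\left|\mathcal{S}_{k-1}\right|$. With that justified, the cancellation of $(k-1)!$ against $\tfrac{1}{(k-1)!}$ is immediate and the result follows.
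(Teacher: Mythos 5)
Your proposal is correct and follows essentially the same argument as the paper: for each hyperedge containing $v_{i}$, the $(k-1)!$ permutations of its remaining vertices index entries of value $\tfrac{1}{(k-1)!}$, so each such hyperedge contributes exactly $1$ to the sum, yielding $\deg\left(v_{i}\right)$. Your treatment is simply more explicit than the paper's about the partition of the index tuples and the role of $k$-uniformity in guaranteeing the clean $(k-1)!$ count.
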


\begin{proof}

On the first mode of the degree normalized adjacency tensor, for a
given vertex $v_{i}$ that occurs in a hyperedge $e=\left\{ v_{i},v_{i_{2}},...,v_{i_{k}}\right\} $
the elements $a_{i\sigma\left(i_{2}\right)...\sigma\left(i_{k}\right)}=\dfrac{1}{(k-1)!}$
where $\sigma\in\mathcal{S}_{k-1}$ which exist in quantity $\left(k-1\right)!$
in the first mode. Hence $\sum\limits _{\sigma\in\mathcal{S}_{k-1}}a_{i\sigma\left(i_{2}\right)...\sigma\left(i_{k}\right)}=1$. 

Therefore doing it for all hyperedges where $v_{i}$ is an element
allows to retrieve the degree of $v_{i}$.

\end{proof}

This definition could be interpreted as the definition of the e-adjacency
tensor for a uniform hypergraph since the notion of $k$-adjacency
and e-adjacency are equivalent in a $k$-uniform hypergraph.

In \citet{hu2013spectral} a full study of the spectra of an uniform
hypergraph using the Laplacian tensor is given. The definition of
the Laplacian tensor is linked to the existence and definition of
the normalized ({[}Author's note{]}: $k$-)adjacency tensor.

\begin{defin}

The ({[}Author's note{]}: \textbf{eigenvalues}) \textbf{normalized}
({[}Author's note{]}: \textbf{$k$-})\textbf{adjacency tensor} of
a $k$-uniform hypergraph $\mathcal{H}=\left(V,E\right)$ on a finite
set of vertices $V=\left\{ v_{1}\,,\,v_{2},\,...\,,\,v_{n}\right\} $
and a family of hyperedges $E=\left\{ e_{1},e_{2},...,e_{p}\right\} $
of equal cardinality $k$ is the tensor $\overline{\mathcal{A}}=\left(a_{i_{1}...i_{k}}\right)_{1\leqslant i_{1},...,i_{k}\leqslant n}$
such that: 
\[
\overline{a_{i_{1}...i_{k}}}=\begin{cases}
\dfrac{1}{(k-1)!}\prod\limits _{1\leqslant j\leqslant k}\dfrac{1}{\sqrt[k]{d_{i_{j}}}} & \text{if }\left\{ v_{i_{1}},...,v_{i_{k}}\right\} \in E\\
0 & \text{otherwise.}
\end{cases}
\]

\end{defin}

The aim of the normalization is motivated by the bounding of the different
eigenvalues of the tensor.

The normalized Laplacian tensor $\mathcal{L}$ is given in the following
definition.

\begin{defin}

The normalized Laplacian tensor of a $k$-uniform hypergraph $\mathcal{H}=\left(V,E\right)$
on a finite set of vertices $V=\left\{ v_{1}\,,\,v_{2},\,...\,,\,v_{n}\right\} $
and a family of hyperedges $E=\left\{ e_{1},e_{2},...,e_{p}\right\} $
of equal cardinality $k$ is the tensor $\mathcal{L}=\mathcal{I}-\mathcal{A}$
where $\mathcal{I}$ is the $k$-th order $n$-dimensional diagonal
tensor with the $j$-th diagonal element $i_{j...j}=1$ if $d_{j}>0$
and 0 otherwise.

\end{defin}

In \citet{banerjee2017spectra} the definition is extended to general
hypergraph.

\begin{defin}Let $\mathcal{H}=\left(V,E\right)$ on a finite set
of vertices $V=\left\{ v_{1}\,,\,v_{2},\,...\,,\,v_{n}\right\} $
and a family of hyperedges $E=\left\{ e_{1},e_{2},...,e_{p}\right\} $.
Let $k_{\max}=\max\left\{ \left|e_{i}\right|:e_{i}\in E\right\} $
be the maximum cardinality of the family of hyperedges.

The adjacency hypermatrix of $\mathcal{H}$ written $\mathcal{A_{H}}=\left(a_{i_{1}...i_{k_{\max}}}\right)_{1\leqslant i_{1},...,i_{k_{\max}}\leqslant n}$
is such that for a hyperedge: $e=\left\{ v_{l_{1}},...,v_{l_{s}}\right\} $
of cardinality $s\leqslant k_{\max}$.

\[
a_{p_{1}...p_{k_{\max}}}=\dfrac{s}{\alpha}\text{, where }\alpha=\sum\limits _{\substack{k_{1},...,k_{s}\geqslant1\\
\sum k_{i}=k_{\text{max}}
}
}\dfrac{k_{\max}!}{k_{1}!...k_{s}!}
\]

with $p_{1}$, ..., $p_{k_{\max}}$ chosen in all possible way from
$\left\{ l_{1},...,l_{s}\right\} $ with at least once from each element
of $\left\{ l_{1},...,l_{s}\right\} $.

The other position of the hypermatrix are zero.

\end{defin}

The first problem in this case is that the notion of $k$-adjacency
as it has been mentioned earlier is not the most appropriated for
a general hypergraph where the notion of e-adjacency is much stronger.
The approach in \citet{shao2013general} and \citet{Pearson2014}
consists in the retrieval of the classical adjacency matrix for the
case where the hypergraph is 2-uniform - ie is a graph - by keeping
their degree invariant: therefore the degree of each vertex can be
retrieved on the first mode of the tensor by sum.

In \citet{hu2013spectral} the focus is made on the spectra of the
tensors obtained: the normalization is done to keep eigenvalues of
the tensor bounded. Extending this approach for general hypergraph,
\citet{banerjee2017spectra} spreads the information of lower cardinality
hyperedges inside the tensor. This approach focuses on the spectra
of the hypermatrix built. The e-adjacency cubical hypermatrix of order
$k_{\max}$ is kept at a dimension of the number of vertices $n$
at the price of splitting elements. Practically it could be hard to
use as the number of elements to be described for just one hyperedge
can explode. Indeed for each hyperedge the partition of $k_{\max}$
in $s$ parts has to be computed.

The number of partitions $p_{s}(m)$ of an integer $m$ in $s$ part
is given by the formula:

\[
p_{s}(m)=p_{s}(m-s)+p_{s-1}(m-1)
\]

This formula is obtained by considering the disjunctive case for splitting
$m$ in $s$ part:
\begin{itemize}
\item either the last part is equal to 1, and then $m-1$ has to be divided
in $s-1$;
\item or (exclusive) the $s$ parts are equals to at least 2, and then $m-s$
has to be divided in $s$.
\end{itemize}
First values of the number of partitions are given in Table \ref{Tab: Number of partitions of size s of an integer m}.

\begin{table}[H]
\begin{center}
\resizebox{0.9\textwidth}{!}{
\begin{tabular}{lrrrrrrrrrrrrrrrrrrrrrrrrr} \hline  m\textbackslash{}s & 1 &  2 &  3 &   4 &   5 &   6 &   7 &   8 &   9 &  10 &  11 &  12 & 13 & 14 & 15 & 16 & 17 & 18 & 19 & 20 & 21 & 22 & 23 & 24 & 25 \\  1   & 1 &  0 &  0 &   0 &   0 &   0 &   0 &   0 &   0 &   0 &   0 &   0 &  0 &  0 &  0 &  0 &  0 &  0 &  0 &  0 &  0 &  0 &  0 &  0 &  0 \\  2   & 1 &  1 &  0 &   0 &   0 &   0 &   0 &   0 &   0 &   0 &   0 &   0 &  0 &  0 &  0 &  0 &  0 &  0 &  0 &  0 &  0 &  0 &  0 &  0 &  0 \\  3   & 1 &  1 &  1 &   0 &   0 &   0 &   0 &   0 &   0 &   0 &   0 &   0 &  0 &  0 &  0 &  0 &  0 &  0 &  0 &  0 &  0 &  0 &  0 &  0 &  0 \\  4   & 1 &  2 &  1 &   1 &   0 &   0 &   0 &   0 &   0 &   0 &   0 &   0 &  0 &  0 &  0 &  0 &  0 &  0 &  0 &  0 &  0 &  0 &  0 &  0 &  0 \\  5   & 1 &  2 &  2 &   1 &   1 &   0 &   0 &   0 &   0 &   0 &   0 &   0 &  0 &  0 &  0 &  0 &  0 &  0 &  0 &  0 &  0 &  0 &  0 &  0 &  0 \\  6   & 1 &  3 &  3 &   2 &   1 &   1 &   0 &   0 &   0 &   0 &   0 &   0 &  0 &  0 &  0 &  0 &  0 &  0 &  0 &  0 &  0 &  0 &  0 &  0 &  0 \\  7   & 1 &  3 &  4 &   3 &   2 &   1 &   1 &   0 &   0 &   0 &   0 &   0 &  0 &  0 &  0 &  0 &  0 &  0 &  0 &  0 &  0 &  0 &  0 &  0 &  0 \\  8   & 1 &  4 &  5 &   5 &   3 &   2 &   1 &   1 &   0 &   0 &   0 &   0 &  0 &  0 &  0 &  0 &  0 &  0 &  0 &  0 &  0 &  0 &  0 &  0 &  0 \\  9   & 1 &  4 &  7 &   6 &   5 &   3 &   2 &   1 &   1 &   0 &   0 &   0 &  0 &  0 &  0 &  0 &  0 &  0 &  0 &  0 &  0 &  0 &  0 &  0 &  0 \\  10  & 1 &  5 &  8 &   9 &   7 &   5 &   3 &   2 &   1 &   1 &   0 &   0 &  0 &  0 &  0 &  0 &  0 &  0 &  0 &  0 &  0 &  0 &  0 &  0 &  0 \\  11  & 1 &  5 & 10 &  11 &  10 &   7 &   5 &   3 &   2 &   1 &   1 &   0 &  0 &  0 &  0 &  0 &  0 &  0 &  0 &  0 &  0 &  0 &  0 &  0 &  0 \\  12  & 1 &  6 & 12 &  15 &  13 &  11 &   7 &   5 &   3 &   2 &   1 &   1 &  0 &  0 &  0 &  0 &  0 &  0 &  0 &  0 &  0 &  0 &  0 &  0 &  0 \\  13  & 1 &  6 & 14 &  18 &  18 &  14 &  11 &   7 &   5 &   3 &   2 &   1 &  1 &  0 &  0 &  0 &  0 &  0 &  0 &  0 &  0 &  0 &  0 &  0 &  0 \\  14  & 1 &  7 & 16 &  23 &  23 &  20 &  15 &  11 &   7 &   5 &   3 &   2 &  1 &  1 &  0 &  0 &  0 &  0 &  0 &  0 &  0 &  0 &  0 &  0 &  0 \\  15  & 1 &  7 & 19 &  27 &  30 &  26 &  21 &  15 &  11 &   7 &   5 &   3 &  2 &  1 &  1 &  0 &  0 &  0 &  0 &  0 &  0 &  0 &  0 &  0 &  0 \\  16  & 1 &  8 & 21 &  34 &  37 &  35 &  28 &  22 &  15 &  11 &   7 &   5 &  3 &  2 &  1 &  1 &  0 &  0 &  0 &  0 &  0 &  0 &  0 &  0 &  0 \\  17  & 1 &  8 & 24 &  39 &  47 &  44 &  38 &  29 &  22 &  15 &  11 &   7 &  5 &  3 &  2 &  1 &  1 &  0 &  0 &  0 &  0 &  0 &  0 &  0 &  0 \\  18  & 1 &  9 & 27 &  47 &  57 &  58 &  49 &  40 &  30 &  22 &  15 &  11 &  7 &  5 &  3 &  2 &  1 &  1 &  0 &  0 &  0 &  0 &  0 &  0 &  0 \\  19  & 1 &  9 & 30 &  54 &  70 &  71 &  65 &  52 &  41 &  30 &  22 &  15 & 11 &  7 &  5 &  3 &  2 &  1 &  1 &  0 &  0 &  0 &  0 &  0 &  0 \\  20  & 1 & 10 & 33 &  64 &  84 &  90 &  82 &  70 &  54 &  42 &  30 &  22 & 15 & 11 &  7 &  5 &  3 &  2 &  1 &  1 &  0 &  0 &  0 &  0 &  0 \\  21  & 1 & 10 & 37 &  72 & 101 & 110 & 105 &  89 &  73 &  55 &  42 &  30 & 22 & 15 & 11 &  7 &  5 &  3 &  2 &  1 &  1 &  0 &  0 &  0 &  0 \\  22  & 1 & 11 & 40 &  84 & 119 & 136 & 131 & 116 &  94 &  75 &  56 &  42 & 30 & 22 & 15 & 11 &  7 &  5 &  3 &  2 &  1 &  1 &  0 &  0 &  0 \\  23  & 1 & 11 & 44 &  94 & 141 & 163 & 164 & 146 & 123 &  97 &  76 &  56 & 42 & 30 & 22 & 15 & 11 &  7 &  5 &  3 &  2 &  1 &  1 &  0 &  0 \\  24  & 1 & 12 & 48 & 108 & 164 & 199 & 201 & 186 & 157 & 128 &  99 &  77 & 56 & 42 & 30 & 22 & 15 & 11 &  7 &  5 &  3 &  2 &  1 &  1 &  0 \\  25  & 1 & 12 & 52 & 120 & 192 & 235 & 248 & 230 & 201 & 164 & 131 & 100 & 77 & 56 & 42 & 30 & 22 & 15 & 11 &  7 &  5 &  3 &  2 &  1 &  1 \\ \hline \end{tabular}
}
\end{center}

\caption{Number of partitions of size $s$ of an integer $m$ }
\label{Tab: Number of partitions of size s of an integer m}
\end{table}

This number of partition gives the number of elements to be specified
for a single hyperedge in the Banerjee's hypermatrix, as they can't
be obtained directly by permutation. This number varies depending
on the cardinality of the hyperedge to be represented. This variation
is not a monotonic function of the size $s$.

The value of $\alpha$ to be used for a given hyperedge of size $s$
for a maximal cardinality $k_{\max}$ of the Banerjee's adjacency
tensor is given in Table \ref{Tab: Nb of elements to be filled for an hyperedge of size s}.
This value also reflects the number of elements to be filled in the
hypermatrix for a single hyperedge.

\begin{table}[H]
\begin{center}
\resizebox{0.9\textwidth}{!}{
\begin{tabular}{llllllllr} \hline  $s\textbackslash{}k_{\max}$ & 5  & 10     & 15         & 20             & 25                  & 30                        & 35                             &                                   40 \\  1     & 1  & 1      & 1          & 1              & 1                   & 1                         & 1                              &                                    1 \\  2     & 30 & 896    & 32766      & 956196         & 33554430            & 996183062                 & 34359738366                    &                        1030588363364 \\  3     & 50 & 23640  & 6357626    & 1553222032     & 382505554925        & 94743186241770            & 22960759799383757              &                  5611412540548420920 \\  4     & 20 & 100970 & 135650970  & 149440600896   & 158221556736195     & 164769169326140215        & 170721045139376180665          &             176232934305968169141592 \\  5     & 1  & 125475 & 745907890  & 2826175201275  & 9506452442642751    & 30773997163632534765      & 98200286674992772689630        &          311409618017926342757598795 \\  6     & x  & 61404  & 1522977456 & 17420742448158 & 158199194804672560  & 1322183126915502403463    & 10690725777258446036242741     &        85180421514142371562050204468 \\  7     & x  & 14280  & 1425364941 & 46096037018576 & 1024206402004025515 & 19673349126500416962615   & 354878263731993584768297882    &      6217590037131694711658104802268 \\  8     & x  & 1500   & 702714870  & 61505129881418 & 3154352367940801390 & 129129229794015955874175  & 4769303064589903155918576810   &    167503457011878955780131372020240 \\  9     & x  & 90     & 201328985  & 46422598935960 & 5267776889834101885 & 437004824231068745652585  & 31134364616525428333788664160  &   2051990575671846572076732402739560 \\  10    & x  & 1      & 35145110   & 21559064120035 & 5237969253953146975 & 848748719343315752120887  & 111787775515270562752918708505 &  13174986533143342163734795019830855 \\  11    & x  & x      & 3709706    & 6508114071602  & 3332426908789146245 & 1023444669605845490919630 & 241305539520076885874877723856 &  49059583248616094623568196287767720 \\  12    & x  & x      & 242970     & 1320978392032  & 1430090837664465640 & 814611609439944701336120  & 334883841129942857103836783480 & 114204835945488341535343378586826510 \\  13    & x  & x      & 9100       & 184253421690   & 429168957710189920  & 448888886709990497395170  & 315061943784480485752922317100 & 176097407919167018972821102617824800 \\  14    & x  & x      & 210        & 17758229920    & 92361393090110900   & 177434686702809581360280  & 209636307340035341769456805590 & 188390878586504393731248560781565540 \\  15    & x  & x      & 1          & 1182354420     & 14515221518630650   & 51629112999502425355050   & 101972261667580282621340734042 & 145207225656117240323230829098848300 \\  16    & x  & x      & x          & 53422908       & 1686842411440120    & 11274940758810423952590   & 37193647457294620660325206920  &  83124043946911069759380261652009018 \\  17    & x  & x      & x          & 1637610        & 145857986021220     & 1875745279587180337830    & 10373941738039097562798529130  &  36202281770971401316508548887148260 \\  18    & x  & x      & x          & 31350          & 9387370139400       & 240458041631247630090     & 2247098355408068243367808830   &  12227164493902961371079076114591450 \\  19    & x  & x      & x          & 380            & 446563570200        & 23950282001673975675      & 382710033315178514982029070    &   3252386812566620163782349432515670 \\  20    & x  & x      & x          & 1              & 15571428950         & 1862767268307916425       & 51758773473472067323039950     &    690009783002559481444810135863737 \\  21    & x  & x      & x          & x              & 390169010           & 113301447816411855        & 5602215923984438576703270      &    117978632939681392614390018854490 \\  22    & x  & x      & x          & x              & 6932200             & 5375646410875455          & 488160287033902614520290       &     16396955289494938961248184877710 \\  23    & x  & x      & x          & x              & 80500               & 197788491523350           & 34380160285907377001220        &      1865425003253790074111730106860 \\  24    & x  & x      & x          & x              & 600                 & 5587457302050             & 1960619958296697461400         &       174704650201012418163972506640 \\  25    & x  & x      & x          & x              & 1                   & 119813107050              & 90483896754284001150           &        13528775872638975527061789150 \\  26    & x  & x      & x          & x              & x                   & 1909271637                & 3368998127887283892            &          868981935345151947003947262 \\  27    & x  & x      & x          & x              & x                   & 22143240                  & 100617182607307212             &           46381804383191991754075704 \\  28    & x  & x      & x          & x              & x                   & 172550                    & 2391172870380140               &            2057782621039570457724152 \\  29    & x  & x      & x          & x              & x                   & 870                       & 44721107569820                 &              75781801182259804328840 \\  30    & x  & x      & x          & x              & x                   & 1                         & 649591878320                   &               2309066362145733662940 \\  31    & x  & x      & x          & x              & x                   & x                         & 7166900664                     &                 57915248685968404016 \\  32    & x  & x      & x          & x              & x                   & x                         & 58538480                       &                  1187293166698640716 \\  33    & x  & x      & x          & x              & x                   & x                         & 327250                         &                    19717915340636370 \\  34    & x  & x      & x          & x              & x                   & x                         & 1190                           &                      262203877675610 \\  35    & x  & x      & x          & x              & x                   & x                         & 1                              &                        2751867046110 \\  36    & x  & x      & x          & x              & x                   & x                         & x                              &                          22273515966 \\  37    & x  & x      & x          & x              & x                   & x                         & x                              &                            135074420 \\  38    & x  & x      & x          & x              & x                   & x                         & x                              &                               568100 \\  39    & x  & x      & x          & x              & x                   & x                         & x                              &                                 1560 \\  40    & x  & x      & x          & x              & x                   & x                         & x                              &                                    1 \\ \hline 
\end{tabular}
}
\end{center}

\caption{Number of elements to be filled for a hyperedge of size $s$ given
maximal cardinality $m$}
\label{Tab: Nb of elements to be filled for an hyperedge of size s}
\end{table}

In this article, the proposed method to elaborate an e-adjacency tensor
focuses on the interpretability of the construction: a uniformization
process is proposed in which a general hypergraph is transformed in
a uniform hypergraph by adding to it elements. The strong link made
with homogeneous polynomials reinforce the choice made and allow to
retrieve proper matrix of a uniform hypergraph at the end of the process.
The additional vertices help to capture not solely the e-adjacency
but also give the ability to hold the $k$-adjacency whatever the
level it occurs.

The approach is based on the homogeneisation of sums of polynomials
of different degrees and by considering a family of uniform hypergraphs.
It is also motivated by the fact that the information on the cardinality
of the hyperedges has to be kept in some ways and that the elements
should not be mixed with the different layers of the hypergraph.

\section{Towards an e-adjacency tensor of a general hypergraph}

\label{sec:Towards-an-unnormalized}

To build an e-adjacency tensor for a general hypergraph we need a
way to store elements which represent the hyperedges. As these hyperedges
have different cardinalities, the representation of the e-adjacency
of vertices in a unique tensor can be achieved only by filling the
hyperedges with additional elements. The problem of finding an e-adjacency
tensor of a general hypergraph is then transformed in a uniformization
problem.

This uniformisation process should be at least interpretable in term
of uniform hypergraphs. It should capture the structural information
of the hypergraph, which includes information on number of hyperedges,
degrees of vertices and additional information on the profile of the
hypergraph.

We propose a framework based on homogeneous polynomials that are iteratively
summed by weighting with technical coefficients and homogeneized.
This uniformisation process allows the construction of a weighted
uniform hypergraph. The technical coefficients are adjusted to allow
the handshake lemma to hold in the built uniform hypergraph.

\subsection{Family of tensors attached to a hypergraph}

Let $\mathcal{H}=\left(V,E\right)$ be a hypergraph. A hypergraph
can be decomposed in a family of uniform hypergraphs. To achieve it,
let $\mathcal{R}$ be the equivalency relation: $e\mathcal{R}e'\Leftrightarrow\left|e\right|=\left|e'\right|$.

$E/\mathcal{R}$ is the set of classes of hyperedges of same cardinality.
The elements of $E/\mathcal{R}$ are the sets: $E_{k}=\left\{ e\in E:\,\left|e\right|=k\right\} $.

Let $k_{\max}=\max\limits _{e\in E}\left|e\right|$, called the \textbf{range}
of the hypergraph $\mathcal{H}$

Considering $K=\left\{ k:E_{k}\in E/\mathcal{R}\right\} $, it is
set $E_{k}=\emptyset$ for all $k\in\left\llbracket 1\,;\,k_{\max}\right\rrbracket \backslash K$.

Let consider the hypergraphs: $\mathcal{H}_{k}=\left(V,E_{k}\right)$
for all $k\in\left\llbracket 1\,;\,k_{\max}\right\rrbracket $ which
are all $k$-uniform.

It holds: $E=\bigcup\limits _{k=1}^{k_{\max}}E_{k}$ and $E_{j}\cap E_{k}=\emptyset$
for all $j\neq k$, hence $\left(E_{k}\right)_{1\leqslant k\leqslant k_{\text{\text{max}}}}$
formed a partition of $E$ which is unique by the way it has been
defined.

Before going forward the sum of two hypergraphs has to be defined:

\begin{defin}

Let $\mathcal{H}_{1}=\left(V_{1},E_{1}\right)$ and $\mathcal{H}_{2}=\left(V_{2},E_{2}\right)$
be two hypergraphs. The sum of these two hypergraphs is the hypergraph
written $\mathcal{H}_{1}+\mathcal{H}_{2}$ defined as: 
\[
\mathcal{H}_{1}+\mathcal{H}_{2}=\left(V_{1}\cup V_{2},E_{1}\cup E_{2}\right).
\]

This sum is said direct if $E_{1}\cap E_{2}=\emptyset$. In this case
the sum is written $\mathcal{H}_{1}\oplus\mathcal{H}_{2}$.

\end{defin}

Hence: 
\[
\mathcal{H}=\bigoplus\limits _{k=1}^{k_{\max}}\mathcal{H}_{k}.
\]

The hypergraph $\mathcal{H}$ is said to be decomposed in a family
of hypergraphs $\left(\mathcal{H}_{k}\right)_{1\leqslant k\leqslant k_{\text{\text{max}}}}$
where $\mathcal{H}_{k}$ is $k$-uniform. 

An illustration of the decomposition of a hypergraph in a family of
uniform hypergraphs is shown in Figure \ref{Fig: Layers of an hypergraph}.
This family of uniform hypergraphs decomposes the original hypergraph
in layers. A layer holds a $k$-uniform hypergraph ($1\leqslant k\leqslant k_{\max}$):
therefore the layer is said to be of level $k$.

\begin{figure}
\begin{center}\includegraphics[scale=0.3]{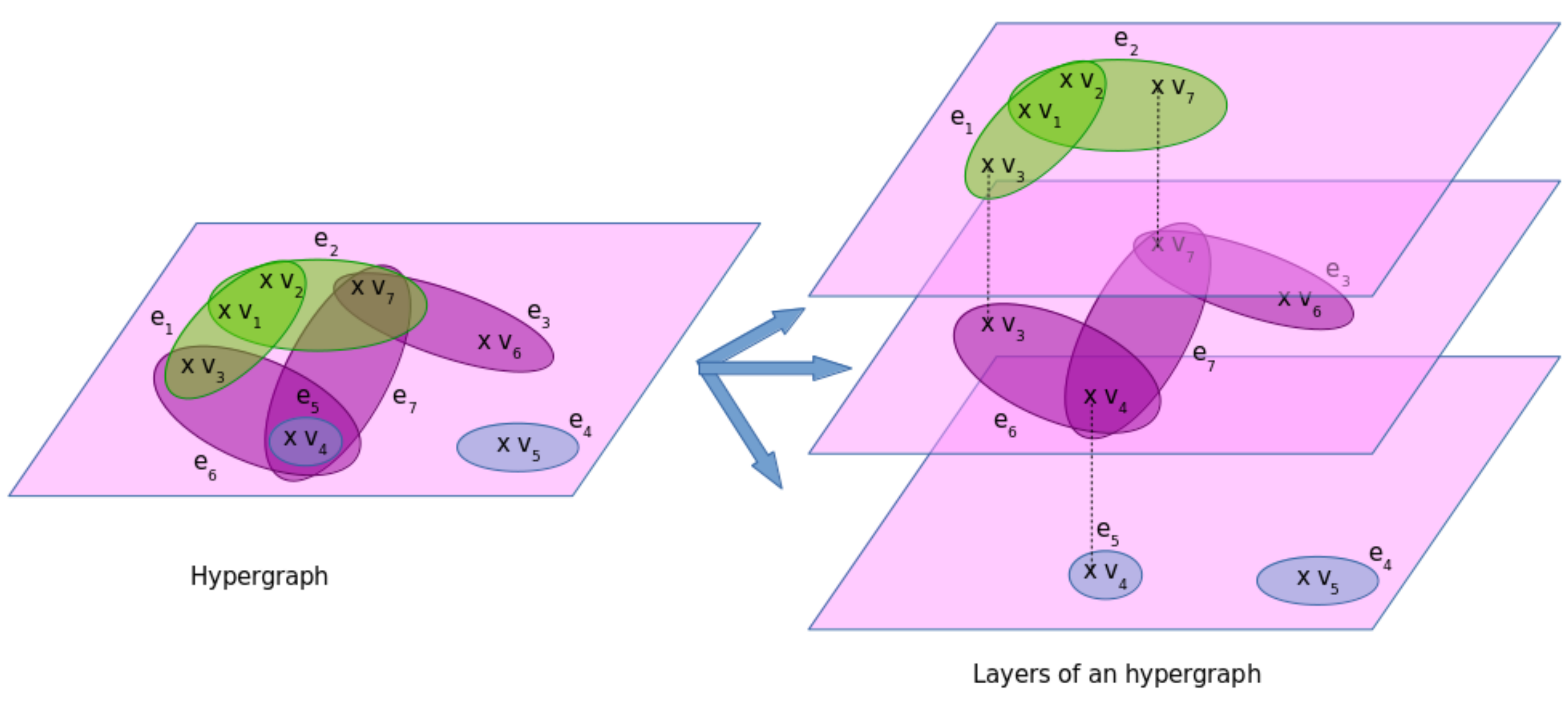}\end{center}

\caption{Illustration of a hypergraph decomposed in three layers of uniform
hypergraphs}
\label{Fig: Layers of an hypergraph}
\end{figure}

Therefore, at each $k$-uniform hypergraph $\mathcal{H}_{k}$ can
be mapped a ($k$-adjacency) e-adjacency tensor $\mathcal{A}_{k}$
of order $k$ which is hypercubic and symmetric of dimension $\left|V\right|=n$.
This tensor can be unnormalized or normalized.

Choosing one type of tensor - normalized or unnormalized for the whole
family of $\mathcal{H}_{k}$ - the hypergraph $\mathcal{H}$ is then
fully described by the family of e-adjacency tensors $\mathcal{A}_{\mathcal{H}}=\left(\mathcal{A}_{k}\right)$.
In the case where all the $\mathcal{A}_{k}$ are chosen normalized
this family is said pre-normalized. The final choice will be made
further in Sub-Section \ref{subsec:Gages-of-the} and explained to
fullfill the expectations listed in the next Sub-Section.

\subsection{Expectations for an e-adjacency tensor for a general hypergraph}

\label{subsec:Expectations-for-a}

The definition of the family of tensors attached to a general hypergraph
has the advantage to open the way to the spectral theory for uniform
hypergraphs which is quite advanced.

Nonetheless many problems remain in keeping a family of tensors of
different orders: studying the spectra of the whole hypergraph could
be hard to achieve by this means. Also it is necessary to get an
e-adjacency tensor which covers the whole hypergraph and which retains
the information on the whole structure.

The idea behind is to ``fill'' the hyperedges with sufficient elements
such that the general hypergraph is transformed in an uniform hypergraph
through a uniformisation process. A similar approach has been taken
in \citet{banerjee2017spectra} where the filling elements are the
vertices belonging to the hyperedge itself. In the next subsections
the justification of the approach taken will be made via homogeneous
polynomials. Before getting to the construction, expected properties
of such a tensor have to be listed.

\begin{expectation}

The tensor should be symmetric and its generation should be simple.

\end{expectation}

This expectation emphasizes the fact that in between two built e-adjacency
tensor, the one that can be easily generated has to be chosen: it
includes the fact that the tensor has to be described in a simple
way.

\begin{expectation}

The tensor should be invariant to vertices permutation either globally
or at least locally.

\end{expectation}

This expectation is motivated by the fact that in a hyperedge the
vertices have no order. The fact that this expectation can be local
remains in the fact that added special vertices will not have the
same status that the one from the original hypergraph. Also the invariance
by permutation is expected on the vertices of the original hypergraph.

\begin{expectation}

The e-adjacency tensor should allow the retrieval of the hypergraph
it is originated from.

\end{expectation}

This expectation seems important to rebuild properly the original
hypergraph from the e-adjacency tensor: all the necessary information
to retrieve the original hyperedges has to be encoded in the tensor.

\begin{expectation}

Giving the choice of two representations the sparsest e-adjacency
tensor should be chosen.

\end{expectation}

Sparsity allows to compress the information and so to gain in place
and complexity in calculus. Also sparsity is a desirable property
for some statistical reasons as shown in \citet{nikolova2000} or
expected in \citet{bruckstein2009sparse} for signal processing and
image encoding.

\begin{expectation}

The e-adjacency tensor should allow the retrieval of the vertex degrees.

\end{expectation}

In the adjacency matrix of a graph the information on the degrees
of the vertices is encoded directly. It is still the case, as it has
been seen, with the $k$-adjacency degree normalised tensor that has
been defined by \citet{shao2013general} and \citet{Pearson2014}.

\subsection{Tensors family and homogeneous polynomials family}

To construct an homogeneous polynomial representing a general hypergraph,
the family of e-adjacency tensors obtained in the previous Subsection
is mapped to a family of homogeneous polynomials. This mapping is
used in \citet{comon2015polynomial} where the author links symmetric
tensors and homogeneous polynomials of degree $s$ to show that the
problem of the CP decomposition of different symmetric tensors of
different orders and the decoupled representation of multivariate
polynomial maps are related.

\subsubsection{Homogeneous polynomials family of a hypergraph}

Let $\mathbb{K}$ be a field. Here $\mathbb{K}=\mathbb{R}$.

Let $\mathcal{A}_{k}\in\mathcal{L}_{k}^{0}\left(\mathbb{K}^{n}\right)$
be a cubical tensor of order $k$ and dimension $n$ with values in
$\mathbb{K}$.

\begin{defin}

Let define the \textbf{Segre outerproduct} $\otimes$ of $\boldsymbol{a}=\left[a_{i}\right]\in\mathbb{K}^{l}$
and $\boldsymbol{b}=\left[b_{j}\right]\in\mathbb{K}^{m}$ as: 
\[
\boldsymbol{a}\otimes\boldsymbol{b}=\left[a_{i}b_{j}\right]_{\substack{1\leqslant i\leqslant l\\
1\leqslant j\leqslant m
}
}\in\mathbb{K}^{l\times m}.
\]

More generaly as given in \citet{comon2008symmetric} the outerproduct
of $k$ vectors $u_{(1)}\in\mathbb{K}^{n_{1}}$, ..., $u_{(k)}\in\mathbb{K}^{n_{k}}$
is defined as:

\[
\stackrel[i=1]{k}{\otimes}\boldsymbol{u_{(i)}}=\left[\prod\limits _{i=1}^{k}u_{(i)j_{i}}\right]_{j_{1},..,j_{k}=1}^{n_{1},...,n_{k}}\in\mathbb{K}^{n_{1}\times...\times n_{k}}.
\]

\end{defin}

Let $\boldsymbol{e_{1}}$, $...$, $\boldsymbol{e_{n}}$ be the canonical
basis of $\mathbb{K}^{n}$.

$\left(\boldsymbol{e_{i_{1}}}\otimes...\otimes\boldsymbol{e_{i_{k}}}\right)_{1\leqslant i_{1},...,i_{k}\leqslant n}$
is a basis of $\mathcal{L}_{k}^{0}\left(\mathbb{K}^{n}\right)$.

Then $\mathcal{A}_{k}$ can be written as:

\[
\mathcal{A}_{k}=\sum\limits _{1\leqslant i_{1},...,i_{k}\leqslant n}a_{(k)\,i_{1}...i_{k}}\boldsymbol{e_{i_{1}}}\otimes...\otimes\boldsymbol{e_{i_{k}}}
\]

The notation $\boldsymbol{A_{k}}$ will be used for the corresponding
hypermatrix of coefficients $a_{(k)\,i_{1}...i_{k}}$ where $1\leqslant i_{1},...,i_{k}\leqslant n$.

Let $z\in\mathbb{K}^{n}$, with $z=z^{i}\boldsymbol{e_{i}}$ using
the Einstein convention.

In \citet{lim2013tensors} a multilinear matrix multiplication is
defined as follow:

\begin{defin}

Let $\boldsymbol{A}\in\mathbb{K}^{n_{1}\times...\times n_{d}}$ and
$X_{j}=\left[x_{(j)kl}\right]\in\mathbb{K}^{m_{j}\times n_{j}}$ for
$1\leqslant j\leqslant d$.

$\boldsymbol{A^{\prime}}=\left(X_{1},...,X_{d}\right).\boldsymbol{A}$
is the multilinear matrix multiplication and defined as the matrix
of $\mathbb{K}^{m_{1}\times...\times m_{d}}$ of coefficients: 
\[
a_{j_{1}...j_{d}}^{\prime}=\sum\limits _{k_{1},...,k_{d}=1}^{n_{1},...,n_{d}}x_{(1)j_{1}k_{1}}...x_{(d)j_{d}k_{d}}a_{k_{1}...k_{d}}
\]
 for $1\leqslant j_{i}\leqslant m_{i}$ with $1\leqslant i\leqslant d$.

\end{defin}

Afterwards only vectors $z\in\mathbb{K}^{n}$ are needed and $\mathcal{A}_{k}$
is cubical of order $k$ and dimension $n$. Writing $\left(z,...,z\right)\in\left(\mathbb{K}^{n}\right)^{k}$,
$\left(z\right)_{[k]}$

Therefore $(z)_{[k]}.\boldsymbol{A_{k}}$ contains only one element
written $P_{k}\left(z^{1},...,z^{n}\right)=P_{k}\left(\boldsymbol{z}_{0}\right)$:
\begin{equation}
P_{k}\left(\boldsymbol{z}_{0}\right)=\sum\limits _{1\leqslant i_{1},...,i_{k}\leqslant n}a_{(k)\,i_{1}...i_{k}}z^{i_{1}}...z^{i_{k}}.\label{eq:P_k unreduced}
\end{equation}
\label{Eq:1}

Therefore considering a hypergraph $\mathcal{H}$ with its family
of unnormalized tensor $\mathcal{A}_{\mathcal{H}}=\left(\mathcal{A}_{k}\right)$,
it can be also attached a family $P_{\mathcal{H}}=\left(P_{k}\right)$
of homogenous polynomials with $\deg\left(P_{k}\right)=k$.

The formulation of $P_{k}$ can be reduced taking into account that
$\mathcal{A}_{k}$ is symmetric for a uniform hypergraph:

\begin{equation}
P_{k}\left(\boldsymbol{z}_{0}\right)=\sum\limits _{1\leqslant i_{1}\leqslant...\leqslant i_{k}\leqslant n}k!a_{(k)\,i_{1}...i_{k}}z^{i_{1}}...z^{i_{k}}.\label{eq:P_k reduced}
\end{equation}

Writing:

\begin{equation}
\widetilde{P_{k}}\left(\boldsymbol{z}_{0}\right)=\sum\limits _{1\leqslant i_{1}\leqslant...\leqslant i_{k}\leqslant n}\alpha_{(k)\,i_{1}...i_{k}}z^{i_{1}}...z^{i_{k}}.\label{eq:P_k reduced_v2}
\end{equation}

the reduced form of $P_{k}$, it holds:

\[
P_{k}\left(\boldsymbol{z}_{0}\right)=k!\widetilde{P_{k}}\left(\boldsymbol{z}_{0}\right).
\]

Writing for $1\leqslant i_{1}\leqslant...\leqslant i_{k}\leqslant n$:
\[
\alpha_{(k)\,i_{1}...i_{k}}=k!a_{(k)\,i_{1}...i_{k}}
\]

and $\alpha_{(k)\,\sigma\left(i_{1}\right)...\sigma\left(i_{k}\right)}=0$
for $\sigma\in\mathcal{S}_{k}$, $\sigma\neq\text{Id}$

It holds:

\begin{eqnarray}
P_{k}\left(\boldsymbol{z}_{0}\right) & = & \sum\limits _{1\leqslant i_{1}\leqslant...\leqslant i_{k}\leqslant n}\alpha_{(k)\,i_{1}...i_{k}}z^{i_{1}}...z^{i_{k}}\nonumber \\
 & = & \sum\limits _{1\leqslant i_{1},...,i_{k}\leqslant n}\alpha_{(k)\,i_{1}...i_{k}}z^{i_{1}}...z^{i_{k}}.\label{eq:P_k reduced_v3}
\end{eqnarray}

and:

\begin{equation}
\widetilde{P_{k}}\left(\boldsymbol{z}_{0}\right)=\sum\limits _{1\leqslant i_{1}\leqslant...\leqslant i_{k}\leqslant n}\dfrac{\alpha_{(k)\,i_{1}...i_{k}}}{k!}z^{i_{1}}...z^{i_{k}}.\label{eq:P_k reduced_v2-1}
\end{equation}

\subsubsection{Reversibility of the process}

Reciprocally, given a homogeneous polynomial of degree $k$ a unique
hypercubic tensor of order $k$ can be built: its dimension is the
number of different variables in the homogeneous polynomial. If the
homogeneous polynomial of degree $k$ is supposed reduced and ordered
then only one hypercubic and symmetric hypermatrix can be built. It
reflects uniquely a $k$-uniform hypergraph adding the constraint
that each monomial is composed of the product of $k$ different variables.

\begin{prop}

Let $P\left(z_{0}\right)=\sum\limits _{1\leqslant i_{1},...,i_{k}\leqslant n}a_{i_{1}...i_{k}}z^{i_{1}}...z^{i_{k}}$
be a homogeneous polynomial of degree $k$ where:
\begin{itemize}
\item for $j\neq k$ : $z^{j}\neq z^{k}$ 
\item for all $1\leqslant j\leqslant n$: $\deg\left(z^{j}\right)=1$
\item and such that for all $\sigma\in\mathcal{S}_{k}$: $a_{\sigma\left(i_{1}\right)...\sigma\left(i_{k}\right)}=a_{i_{1}...i_{k}}$.
\end{itemize}
Then $P$ is the homogeneous polynomial attached to a unique $k$-uniform
hypergraph $\mathcal{H}=\left(V,E,w\right)$ - up to the indexing
of vertices.

\end{prop}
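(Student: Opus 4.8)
The plan is to prove the two halves of the statement separately: first construct a $k$-uniform weighted hypergraph whose attached homogeneous polynomial is exactly $P$ (existence), then show that any two such hypergraphs coincide after relabelling of vertices (uniqueness). Throughout, the hypotheses enter as follows. The distinctness of the variables together with the degree-one condition forces every monomial of $P$ to be a product of $k$ \emph{distinct} variables, which is precisely what lets a monomial encode a set of $k$ vertices (a genuine hyperedge, not a multiset), and hence guarantees $k$-uniformity. The symmetry condition $a_{\sigma(i_1)\cdots\sigma(i_k)} = a_{i_1\cdots i_k}$ is what makes the edge weights well defined and the underlying tensor recoverable without ambiguity.

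For existence, I would set $V = \{v_1,\ldots,v_n\}$ with $n$ the number of distinct variables, associating $v_i$ with $z^i$. By the hypotheses each monomial of $P$ with nonzero coefficient has the form $z^{i_1}\cdots z^{i_k}$ with $1 \le i_1 < \cdots < i_k \le n$; to it I attach the hyperedge $e = \{v_{i_1},\ldots,v_{i_k}\}$, of cardinality exactly $k$, and I set $w(e) = a_{i_1\cdots i_k}$. This weight is well defined by symmetry, since all $k!$ index tuples permuting $(i_1,\ldots,i_k)$ carry the same coefficient. Gathering these edges yields a $k$-uniform weighted hypergraph $\mathcal{H}=(V,E,w)$.

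Next I would check that the polynomial attached to $\mathcal{H}$ is $P$. Building the symmetric order-$k$ tensor of $\mathcal{H}$ by $a_{(k)\,i_1\cdots i_k} = w(e)$ when $\{v_{i_1},\ldots,v_{i_k}\}=e\in E$ (indices distinct) and $0$ otherwise, its attached polynomial in reduced form is $\sum_{i_1<\cdots<i_k} k!\,a_{(k)\,i_1\cdots i_k}\, z^{i_1}\cdots z^{i_k}$. On the other side, collecting the terms of $P$ by monomial, the coefficient of $z^{i_1}\cdots z^{i_k}$ (for $i_1<\cdots<i_k$) is $\sum_{\sigma\in\mathcal{S}_k} a_{\sigma(i_1)\cdots\sigma(i_k)} = k!\,a_{i_1\cdots i_k}$, again by symmetry and distinctness of the indices. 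Since $w(e)=a_{i_1\cdots i_k}$, the two expressions agree monomial by monomial, so the attached polynomial is $P$.

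Finally, for uniqueness, suppose $\mathcal{H}'=(V',E',w')$ is another $k$-uniform weighted hypergraph whose attached polynomial is $P$. Identifying $V'$ with $\{v_1,\ldots,v_n\}$ through the same variable correspondence, the support of $P$ (its monomials with nonzero coefficient) is in bijection with the $k$-subsets that must be edges, so $E'=E$; matching the coefficient $k!\,w'(e)$ of each monomial with that of $P$ then forces $w'=w$. Hence $\mathcal{H}'=\mathcal{H}$ after relabelling, which is the ``up to indexing of vertices'' clause. The main obstacle I expect is bookkeeping rather than conceptual difficulty: keeping the factor $k!$ consistent between the unreduced tensor contraction and the reduced monomial form, and phrasing uniqueness cleanly given that a variable not occurring in $P$ would correspond to an isolated vertex the polynomial cannot detect — so strictly the recovery is unique up to such isolated vertices and the labelling of $V$.
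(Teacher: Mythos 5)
Your proposal is correct and follows essentially the same route as the paper: vertices are identified with variables, each class of nonzero symmetric coefficients (equivalently, each square-free monomial) is turned into a hyperedge whose weight is read off the coefficient, and the symmetry hypothesis is what makes this assignment well defined. The paper's own proof is just a two-line version of your existence step (it assigns weight $k\,a_{i_1\dots i_k}$ under its normalization convention, where you take $w(e)=a_{i_1\dots i_k}$, a harmless convention difference given the paper only implicitly defines the polynomial attached to a \emph{weighted} hypergraph); your explicit verification that the attached polynomial equals $P$, the uniqueness argument via the support of $P$, and the isolated-vertex caveat are all details the paper leaves unstated.
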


\begin{proof}

Considering the vertices $\left(v_{i}\right)_{1\leqslant i\leqslant n}$
labellized by the elements of $\left\llbracket 1,n\right\rrbracket $.

If $a_{i_{1}...i_{k}}\neq0$ then for all $\sigma\in\mathcal{S}_{k}$:
$a_{\sigma\left(i_{1}\right)...\sigma\left(i_{k}\right)}$ a unique
hyperedge $e_{j}$ is attached corresponding to the vertices $v_{i_{1}},...,v_{i_{k}}$
and which has weight $w(e_{j})=ka_{i_{1}...i_{k}}$.

\end{proof}

\subsection{Uniformisation and homogeneisation processes}

A single tensor is always easier to be used than a family of tensors;
the same apply for homogeneous polynomials. Building a single tensor
from different order tensors requires to fill in the ``gaps''; summing
homogeneous polynomials of varying degrees always give a new polynomial:
but, most frequently this polynomial is no more homogeneous. Homogeneisation
techniques for polynomials are well known and require additional variables.

Different homogeneisation process can be envisaged to get a homogeneous
polynomial that represents a single cubic and symmetric tensor by
making different choices on the variables added in the homogeneisation
phase of the polynomial. As a link has been made between the variables
and the vertices of the hypergraph, we want that this link continue
to occur during the homogeneisation of the polynomial as each term
of the reduced polynomial corresponds to a unique hyperedge in the
original hypergraph; the homogenisation process is interpretable in
term of hypergraph uniformisation process of the original hypergraph:
hypergraph uniformisation process and polynomial homogeneisation process
are the two sides of the same coin. 

So far, we have separated the original hypergraph $\mathcal{H}$ in
layers of increasing $k$-uniform hypergraphs $\mathcal{H}_{k}$ such
that 
\[
\mathcal{H}=\bigoplus\limits _{k=1}^{k_{\max}}\mathcal{H}_{k}.
\]
Each $k$-uniform hypergraph can be represented by a symmetric and
cubic tensor. This symmetric and cubic tensor is mapped to a homogeneous
polynomial. The reduced homogeneous polynomial is interpretable, if
we omit the coefficients of each term, as a disjunctive normal form.
Each term of the homogeneous polynomial is a cunjunctive form which
corresponds to simultaneous presence of vertices in a hyperedge: adding
all the layers allows to retrieve the original hypergraph; adding
the different homogeneous polynomials allows to retrieve the disjunctive
normal form associated with the original hypergraph.

In the hypergraph uniformisation process, iterative steps are done
starting with the lower layers to the upper layers of the hypergraph.
In parallel, the polynomial homogeneisation process is the algebraic
justification of the hypergraph uniformisation process. It allows
to retrieve a polynomial attached to the uniform hypergraph built
at each step and hence a tensor. 

\subsubsection{Hypergraph uniformisation process}

We can describe algorithmically the hypergraph uniformisation process:
it transforms the original hypergraph in a uniform hypergraph.

\subsubsection*{Initialisation}

The initialisation requires that each layer hypergraph is associated
to a weighted hypergraph.

To each uniform hypergraph $\mathcal{H}_{k}$, we associate a weighted
hypergraph $\mathcal{H}_{w_{k},k}=\left(V,E_{k},w_{k}\right)$, with:
$\forall e\in E_{k}:w_{k}(e)=c_{k}$, $c_{k}\in\mathbb{R}^{+*}$.

The coefficients $c_{k}$ are technical coefficients that will be
chosen when considering the homogeneisation process and the fullfillment
of the expectations of the e-adjacency tensor. The coefficients $c_{k}$
can be seen as dilatation coefficients only dependent of the layers
of the original hypergraph.

We initialise:

$k:=1$ and $\mathcal{K}_{w}:=\mathcal{H}_{w_{1},1}$

and generate $k_{\max}-1$ distinct vertices $y_{j}$, $j\in\left\llbracket 1,k_{\max}-1\right\rrbracket $
that are not in $V$.

\subsubsection*{Iterative steps}

Each step in the hypergraph uniformisation process includes three
phases: an inflation phase, a merging phase and a concluding phase.

\paragraph*{Inflation phase: }

The inflation phase consists in increasing the cardinality of each
hyperedge obtained from the hypergraph built at the former step to
reach the cardinality of the hyperedges of the second hypergraph used
in the merge phase. 

\begin{defin}

The \textbf{$y$-vertex-augmented hypergraph} of a weighted hypergraph
$\mathcal{H}_{w}=\left(V,E,w\right)$ is the hypergraph $\overline{\mathcal{H}_{\overline{w}}}=\left(\overline{V},\overline{E},\overline{w}\right)$
obtained by the following rules
\begin{itemize}
\item $y\notin V$;
\item $\overline{V}=V\cup\left\{ y\right\} $; 
\item Writing $\phi:\mathcal{P}\left(V\right)\rightarrow\mathcal{P}\left(\overline{V}\right)$
the map such that for $A\in\mathcal{P}\left(V\right):\,$$\phi(A)=A\cup\left\{ y\right\} $,
it holds:
\begin{itemize}
\item $\overline{E}=\left\{ \phi\left(e\right):e\in E\right\} $;
\item $\forall e\in E$, $\overline{w}\left(\phi(e)\right)=w(e)$.
\end{itemize}
\end{itemize}
\end{defin}

\begin{prop}

The vertex-augmented hypergraph of a $k$-uniform hypergraph is a
$k+1$-uniform hypergraph.

\end{prop}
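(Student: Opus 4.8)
The plan is to verify directly that every hyperedge of the augmented hypergraph has cardinality exactly $k+1$, so the whole statement reduces to a single cardinality computation. First I would fix an arbitrary weighted $k$-uniform hypergraph $\mathcal{H}_{w}=\left(V,E,w\right)$ and its $y$-vertex-augmented hypergraph $\overline{\mathcal{H}_{\overline{w}}}=\left(\overline{V},\overline{E},\overline{w}\right)$, and then take an arbitrary hyperedge $\overline{e}\in\overline{E}$. By the definition of the augmented hypergraph, $\overline{e}$ is of the form $\overline{e}=\phi(e)=e\cup\left\{ y\right\} $ for some $e\in E$. Since $\mathcal{H}_{w}$ is $k$-uniform, $\left|e\right|=k$.

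The only point that needs care is that the union $e\cup\left\{ y\right\} $ genuinely increases the cardinality by one, that is, that $y\notin e$. This follows at once from the defining hypothesis $y\notin V$ together with $e\subseteq V$, which forces $y\notin e$; hence adding $y$ is not a repetition and $\left|\overline{e}\right|=\left|e\cup\left\{ y\right\} \right|=\left|e\right|+1=k+1$. Because $\overline{e}$ was chosen arbitrarily, every hyperedge of $\overline{E}$ has cardinality $k+1$, and therefore $\overline{\mathcal{H}_{\overline{w}}}$ is $\left(k+1\right)$-uniform, as claimed.

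There is no real obstacle in this argument: it is a one-line cardinality count whose only genuine input is the \emph{freshness} of the added vertex $y$ (i.e.\ $y\notin V$), which guarantees that $\phi$ strictly enlarges each hyperedge rather than leaving it unchanged. If one wanted to be fully scrupulous, the only thing worth remarking is that $\phi$ is well defined as a map $\mathcal{P}\left(V\right)\rightarrow\mathcal{P}\left(\overline{V}\right)$ and that distinct hyperedges remain distinct after augmentation, but neither of these is needed for the uniformity conclusion itself, which concerns cardinalities alone.
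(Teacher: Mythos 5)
Your proof is correct: the paper states this proposition without any proof at all, treating it as immediate from the definition of the $y$-vertex-augmented hypergraph, and your argument is exactly the intended justification. You also correctly isolate the one hypothesis doing the work — the freshness condition $y\notin V$, which together with $e\subseteq V$ guarantees $\left|\phi(e)\right|=\left|e\right|+1=k+1$ for every hyperedge.
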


The inflation phase at step $k$ generates from $\mathcal{K}_{w}$
the $y_{k}$-vertex augmented hypergraph $\overline{\mathcal{K}_{\overline{w}}}$.

As $\mathcal{K}_{w}$ is $k$-uniform at step $k$, $\overline{\mathcal{K}_{\overline{w}}}$
is $k+1$-uniform

\paragraph*{Merging phase:}

The merging phase generates the sum of two weighted hypergraphs called
the merged hypergraph.

\begin{defin}

The \textbf{merged hypergraph} $\widehat{\mathcal{H}_{\widehat{w}}}=\left(\widehat{V},\widehat{E},\widehat{w}\right)$
of two weighted hypergraphs $\mathcal{H}_{a}=\left(V_{a},E_{a},w_{a}\right)$
and $\mathcal{H}_{b}=\left(V_{b},E_{b},w_{b}\right)$ is the weighted
hypergraph defined as follow:
\begin{itemize}
\item $\widehat{V}=V_{a}\cup V_{b}$
\item $\widehat{E_{k+1}}=E_{a}\cup E_{b}$
\item $\forall e\in E_{a}:\,\widehat{w}(e)=\overline{w_{a}}(e)$ and $\forall e\in E_{b}:\,\widehat{w}(e)=w_{b}$
\end{itemize}
\end{defin}

The merging phase at step $k$ generates from $\overline{\mathcal{K}_{w}}$
and $\mathcal{H}_{d,k+1}$ the merged hypergraph $\widehat{\mathcal{K}_{\widehat{w}}}$.
As it is generated from two $k+1$-uniform hypergraph it is also a
$k+1$-uniform hypergraph.

\paragraph*{Step ending phase:}

If $k$ equals $k_{\text{max}}-1$ the iterative part ends up and
return $\widehat{\mathcal{K}_{\widehat{w}}}$. 

Otherwise a next step is need with $\mathcal{K}_{w}:=\widehat{\mathcal{K}_{\widehat{w}}}$
and $k:=k+1$.

\subsubsection*{Termination:}

We obtain by this algorithm a weighted $k_{\max}$-uniform hypergraph
associated to $\mathcal{H}$ which is the returned hypergraph from
the iterative part: we write it $\widehat{\mathcal{H}_{\widehat{w}}}=\left(\widehat{V},\widehat{E},\widehat{w}\right)$.

\begin{defin}

Writing $V_{s}=\left\{ y_{j}:j\in\left\llbracket 1,k_{\max}-1\right\rrbracket \right\} $

$\widehat{\mathcal{H}_{\widehat{w}}}=\left(\widehat{V},\widehat{E},\widehat{w}\right)$
is called the \textbf{$V_{s}$-layered unifom }of $\mathcal{H}$.

\end{defin}

\begin{prop}

Let $\mathcal{H}=\left(V,E\right)$ be a hypergraph of order $k_{\max}$

Let consider $V_{s}=\left\{ y_{j}:j\in\left\llbracket 1,k_{\max}-1\right\rrbracket \right\} $
such that $V\cap V_{s}=\emptyset$ and let $\widehat{\mathcal{H}_{\widehat{w}}}=\left(\widehat{V},\widehat{E},\widehat{w}\right)$
be the \textbf{$V_{s}$-layered unifom }of $\mathcal{H}$. Then:
\begin{itemize}
\item $\left(V,V_{s}\right)$ is a partition of $\widehat{V}$.
\item $\forall e\in E,\exists!\widehat{e}\in\widehat{E}:e\subseteq\widehat{e}\land\widehat{e}\backslash e=\left\{ y_{j}:j\in\left\llbracket \left|e\right|,k_{\max}-1\right\rrbracket \right\} .$
\end{itemize}
\end{prop}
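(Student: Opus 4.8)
The plan is to run an induction on the step index of the uniformisation algorithm, tracking simultaneously the vertex set and the hyperedge set of the intermediate hypergraph $\mathcal{K}_{w}$. Writing $\mathcal{K}_{w}^{(k)}$ for the $(k+1)$-uniform hypergraph held in $\mathcal{K}_{w}$ at the end of step $k$, and $\mathcal{K}_{w}^{(0)}$ for the initial $\mathcal{H}_{w_{1},1}$, I would establish for every $k\in\left\llbracket 0,k_{\max}-1\right\rrbracket $ the invariant that the vertex set of $\mathcal{K}_{w}^{(k)}$ is $V\cup\left\{ y_{1},\dots,y_{k}\right\} $ and that its hyperedge set is
\[
\bigcup_{m=1}^{k+1}\left\{ e\cup\left\{ y_{j}:j\in\left\llbracket m,k\right\rrbracket \right\} :e\in E_{m}\right\} ,
\]
adopting the convention that $\left\llbracket m,k\right\rrbracket =\emptyset$ whenever $m>k$, so that the layer just incorporated by the merge contributes its hyperedges unchanged.

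The base case $k=0$ is immediate, since $\mathcal{K}_{w}^{(0)}=\mathcal{H}_{w_{1},1}$ has vertex set $V$ and hyperedge set $E_{1}$ (the $m=1$ summand with empty index range). For the inductive step I would apply the two phases of step $k+1$ to $\mathcal{K}_{w}^{(k)}$. By the definition of the $y$-vertex-augmented hypergraph, the inflation phase adjoins the fresh vertex $y_{k+1}$ and replaces each hyperedge by its image under $\phi$, thereby promoting every index set $\left\llbracket m,k\right\rrbracket $ to $\left\llbracket m,k+1\right\rrbracket $ uniformly. The merging phase then forms the direct union with the weighted layer $\mathcal{H}_{w_{k+2},k+2}=\left(V,E_{k+2},w_{k+2}\right)$, which introduces no new vertex and contributes exactly the hyperedges of $E_{k+2}$; these are precisely the $m=k+2$ summand, whose $y$-set $\left\llbracket k+2,k+1\right\rrbracket $ is empty. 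This reproduces the invariant at level $k+1$.

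Evaluating the invariant at termination $k=k_{\max}-1$ yields $\widehat{V}=V\cup\left\{ y_{1},\dots,y_{k_{\max}-1}\right\} =V\cup V_{s}$ together with
\[
\widehat{E}=\bigcup_{m=1}^{k_{\max}}\left\{ e\cup\left\{ y_{j}:j\in\left\llbracket m,k_{\max}-1\right\rrbracket \right\} :e\in E_{m}\right\} .
\]
The first assertion is then immediate: $\widehat{V}=V\cup V_{s}$ and the hypothesis $V\cap V_{s}=\emptyset$ make $(V,V_{s})$ a partition of $\widehat{V}$ (the case $k_{\max}=1$, where $V_{s}=\emptyset$ and the iterative part is vacuous, being degenerate and disposed of separately). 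For the second assertion, since $\left(E_{m}\right)_{m}$ partitions $E$, every $e\in E$ lies in a unique layer $E_{m}$ with $m=\left|e\right|$, and the displayed description of $\widehat{E}$ exhibits $\widehat{e}=e\cup\left\{ y_{j}:j\in\left\llbracket \left|e\right|,k_{\max}-1\right\rrbracket \right\} $ as a member of $\widehat{E}$; because the $y_{j}$ are disjoint from $V\supseteq e$, this $\widehat{e}$ satisfies $e\subseteq\widehat{e}$ and $\widehat{e}\backslash e=\left\{ y_{j}:j\in\left\llbracket \left|e\right|,k_{\max}-1\right\rrbracket \right\} $.

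Uniqueness comes essentially for free: the two imposed conditions $e\subseteq\widehat{e}$ and $\widehat{e}\backslash e=\left\{ y_{j}:j\in\left\llbracket \left|e\right|,k_{\max}-1\right\rrbracket \right\} $ together force $\widehat{e}=e\cup\left\{ y_{j}:j\in\left\llbracket \left|e\right|,k_{\max}-1\right\rrbracket \right\} $ as a set, so at most one element of $\widehat{E}$ can qualify. The only genuine work, and hence the main obstacle, is the careful index bookkeeping in the induction: one must verify that the empty-range convention lets each freshly merged layer slot cleanly into the union, and that the inflation phase shifts every range $\left\llbracket m,k\right\rrbracket $ to $\left\llbracket m,k+1\right\rrbracket $ simultaneously across all layers. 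Once the closed form for $\widehat{E}$ is secured, both bullets are read off directly.
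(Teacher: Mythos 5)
Your proof is correct and takes essentially the same approach as the paper: the paper's own proof is a one-line appeal to ``the way the $V_{s}$-layered uniform of $\mathcal{H}$ is generated,'' and your induction with the closed-form invariant
\[
\widehat{E}=\bigcup_{m=1}^{k_{\max}}\left\{ e\cup\left\{ y_{j}:j\in\left\llbracket m,k_{\max}-1\right\rrbracket \right\} :e\in E_{m}\right\}
\]
is exactly that argument carried out in full. Your version simply supplies the index bookkeeping (and the degenerate $k_{\max}=1$ remark) that the paper leaves implicit.
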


\begin{proof} The way the $V_{s}$-layered uniform of $\mathcal{H}$
is generated justifies the results.

\end{proof}

\begin{prop}

Let $\mathcal{H}=\left(V,E\right)$ be a hypergraph of order $k_{\max}$

Let consider $V_{s,i}=\left\{ y_{j}:j\in\left\llbracket i,k_{\max}-1\right\rrbracket \right\} $
such that $V_{s}=\bigcup\limits _{i\in\left\llbracket 1,k_{\max}-1\right\rrbracket }V_{s,i}$,
$V\cap V_{s}=\emptyset$ and let $\widehat{\mathcal{H}_{\widehat{w}}}=\left(\widehat{V},\widehat{E},\widehat{w}\right)$
be the \textbf{$V_{s}$-layered unifom }of $\mathcal{H}$.

Then:

Vertices of $\mathcal{H}$ that are e-adjacent in $\mathcal{H}$ in
an hyperedge $e$ are e-adjacent with the vertices of $V_{s,\left|e\right|}$
in $\widehat{\mathcal{H}_{\widehat{w}}}$.

Reciprocally, if vertices are e-adjacent in $\widehat{\mathcal{H}_{\widehat{w}}}$,
the ones that are not in $V_{s}$ are e-adjacent in $\mathcal{H}$.

\end{prop}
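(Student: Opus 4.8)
The plan is to reduce both implications to the structural characterization of $\widehat{E}$ furnished by the preceding proposition. First I would record the key fact that, for every $e\in E$, the unique augmented hyperedge satisfies $\widehat{e}=e\cup\{y_j:j\in\llbracket|e|,k_{\max}-1\rrbracket\}=e\cup V_{s,|e|}$, and moreover that $\widehat{e}\cap V=e$, since $V_{s,|e|}\subseteq V_s$ and $V\cap V_s=\emptyset$. Because the uniformisation process produces no hyperedge outside the images of the original ones, this exhibits $e\mapsto\widehat{e}$ as a bijection $E\to\widehat{E}$ whose inverse is intersection with $V$; the identity $\widehat{e}\cap V=e$ is then the single algebraic fact driving the whole argument.

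For the forward implication, I would take vertices of $\mathcal{H}$ that are e-adjacent in $\mathcal{H}$ in a hyperedge $e$, i.e. all lying in $e$. Applying the above to $e$ yields $\widehat{e}=e\cup V_{s,|e|}\in\widehat{E}$, a single hyperedge of $\widehat{\mathcal{H}_{\widehat{w}}}$ containing both these vertices (as $e\subseteq\widehat{e}$) and every vertex of $V_{s,|e|}$. Hence all of them, together with the vertices of $V_{s,|e|}$, lie in one hyperedge of $\widehat{\mathcal{H}_{\widehat{w}}}$ and are therefore e-adjacent there, which is exactly the first assertion.

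For the converse, I would start from vertices that are e-adjacent in $\widehat{\mathcal{H}_{\widehat{w}}}$, so they share some hyperedge $\widehat{e}\in\widehat{E}$. By surjectivity of $e\mapsto\widehat{e}$ there is $e\in E$ with $\widehat{e}=e\cup V_{s,|e|}$, and the vertices among them not belonging to $V_s$ lie in $\widehat{e}\cap V=e$. Being all contained in the single hyperedge $e\in E$, they are e-adjacent in $\mathcal{H}$, as claimed.

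The only genuinely non-trivial point is the surjectivity of $e\mapsto\widehat{e}$ invoked in the converse: the preceding proposition asserts only existence and uniqueness of $\widehat{e}$ for each given $e$, so I would additionally appeal to the iterative inflation-and-merge construction to argue that every hyperedge of $\widehat{E}$ arises by inflation from exactly one original hyperedge and is thus of the form $e\cup V_{s,|e|}$. Once this is secured, both directions follow immediately from $\widehat{e}\cap V=e$, and no computation beyond set bookkeeping is required.
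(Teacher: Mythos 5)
Your proposal is correct and follows essentially the same route as the paper, which in fact offers no explicit proof of this proposition at all: like the preceding proposition (whose stated proof is merely that ``the way the $V_{s}$-layered uniform of $\mathcal{H}$ is generated justifies the results''), the claim is left to follow from the inflation-and-merge construction, and your argument via $\widehat{e}=e\cup V_{s,\left|e\right|}$ and $\widehat{e}\cap V=e$ is exactly that reasoning spelled out. Your observation that the converse additionally needs surjectivity of $e\mapsto\widehat{e}$ --- i.e.\ that every hyperedge of $\widehat{E}$ arises by inflation from exactly one original hyperedge, which the merge phases guarantee since they only take unions of inflated hyperedge sets --- is a genuine point of care that the paper leaves tacit.
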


As a consequence, $\widehat{\mathcal{H}_{\widehat{w}}}$ captures
the e-adjacency of $\mathcal{H}$.

\subsubsection{Polynomial homogeneisation process}

In the polynomial homogeneisation process, we build a new family $R_{\mathcal{H}}=\left(R_{k}\right)$
of homogeneous polynomials of degree $k$ iteratively from the family
of homogeneous polynomials $P_{\mathcal{H}}=\left(P_{k}\right)$ by
following the subsequent steps that respect the phases of construction
in Figure \ref{Fig: Phases of the construction}. Each of these steps
can be linked to the steps of the homogeneisation process. 

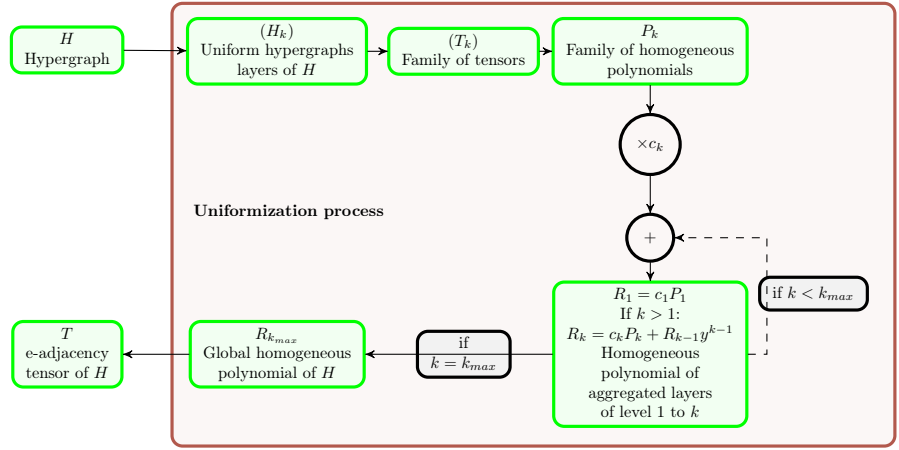
\begin{figure}
\begin{center}
\begin{tikzpicture}[->,>=stealth',scale=0.62, every node/.append style={transform shape}]
\node[state=green] (H) {\begin{tabular}{c}$H$\\Hypergraph\end{tabular}};
\node[state=orange!70!blue,
		right of=H,
		yshift=-3.75cm,
		node distance=10cm,
		minimum width=15.5cm,
		minimum height=9.5cm,
		anchor=center] (Unif) {};
\node[] (Unif_Title) at ([yshift=-4.5cm,xshift=-5.25cm]Unif.north) {\textbf{Uniformization process}};

\node[state=green] (Hk) at  ([yshift=-1.05cm,xshift=-5.5cm]Unif.north){\begin{tabular}{c}$\left( H_k \right)$\\Uniform hypergraphs\\layers of $H$\end{tabular}};
\node[state=green,
		right of=Hk,
		node distance=4cm,
		anchor=center] (Tk) {
			\begin{tabular}{c}
				$\left( T_k \right)$ \\
				Family of tensors
			\end{tabular}
		};
\node[state=green,
		right of=Tk,
		node distance=4cm,
		anchor=center] (Pk) {
			\begin{tabular}{c}
				$P_k$ \\
				Family of homogeneous\\
				polynomials
			\end{tabular}
		};
\node[coeff,
		yshift=-2cm,
		right of=Pk,
		node distance=0cm,
		anchor=center] (ck) {
			\begin{tabular}{c}
				$\times c_k$
			\end{tabular}
		};
\node[coeff,
		yshift=-2cm,
		right of=ck,
		node distance=0cm,
		anchor=center] (plus) {
			\begin{tabular}{c}
				$+$
			\end{tabular}		
		};
\node[state=black,
		yshift=-1.2cm,
		right of=plus,
		node distance=3.5cm,
		anchor=center] (k) {
			\begin{tabular}{c}
				if $k<k_{max}$
			\end{tabular}		
		};

\node[state=green,
		yshift=-2.5cm,
		right of=plus,
		node distance=0cm,
		anchor=center] (Rk) {
			\begin{tabular}{c}
				$R_1=c_1 P_1$\\
				If $k>1$:\\
				$R_k=c_k P_k+R_{k-1} y^{k-1}$\\
				Homogeneous\\
				polynomial of\\
				aggregated layers\\
				of level 1 to $k$\\
			\end{tabular}
		};
\node[state=green,
		yshift=0cm,
		left of=Rk,
		node distance=8cm,
		anchor=center] (Rkmax) {
			\begin{tabular}{c}
				$R_{k_{max}}$\\
				Global homogeneous\\
				polynomial of $H$
			\end{tabular}
		};
\node[state=black,
		yshift=0cm,
		left of=Rk,
		node distance=4cm,
		anchor=center] (kmax) {
			\begin{tabular}{c}
				if\\
				$k=k_{max}$
			\end{tabular}		
		};

\node[state=green,
		yshift=0cm,
		left of=Rkmax,
		node distance=4.5cm,
		anchor=center] (T) {
			\begin{tabular}{c}
				$T$\\
				e-adjacency\\
				tensor of $H$
			\end{tabular}
		};
\path (H) edge (Hk)
(Hk) edge (Tk)
(Tk) edge (Pk)
(Pk) edge (ck)
(ck) edge (plus)
(plus) edge (Rk)
(Rk) edge (Rkmax)
(Rkmax) edge (T);

\draw[->, dashed] (Rk) --++ (2.5,0) --++(0,2.5) --++ (plus);

\end{tikzpicture}
\end{center}

\caption{Different phases of the construction of the e-adjacency tensor}
\label{Fig: Phases of the construction}
\end{figure}

\subsubsection*{Initialisation}

Each polynomial $P_{k}$, $k\in\left\llbracket 1,k_{\max}\right\rrbracket $
attached to the corresponding layer $k$-uniform hypergraph $\mathcal{H}_{k}$
is multiplied by a coefficient $c_{k}$ equals to the dilatation coefficients
of the hypergraph uniformisation process. $c_{k}P_{k}$ represents
the reduced homogeneous polynomial attached to $\mathcal{H}_{w_{k},k}$.

We initialise:

$k:=1$ and $R_{k}\left(\boldsymbol{z}_{k-1}\right)=R_{1}\left(\boldsymbol{z}_{o}\right)=c_{1}P_{1}\left(\boldsymbol{z}_{o}\right)=c_{1}\sum\limits _{i=1}^{n}a_{(1)\,i}z^{i}.$

We generate $k_{\max}-1$ distinct 2 by 2 variables $y^{j}$, $j\in\left\llbracket 1,k_{\max}-1\right\rrbracket $
that are also distinct 2 by 2 from the $z^{i}$, $i\in\left\llbracket 1,n\right\rrbracket $.

\subsubsection*{Iterative steps}

At each step, we sum the current $R_{k}\left(z_{k-1}\right)$ with
the next layer coefficiented polynomial $c_{k+1}P_{k+1}$ in a way
to obtain a homogeneous polynomial $R_{k+1}\left(z_{k}\right)$. To
help the understanding we describe the first step, then generalise
to any step.

\uline{Case \mbox{$k=1$}:} To build $R_{2}$ an homogeneization
of the sum of $R_{1}$ and $c_{2}P_{2}$ is needed. It holds:

\[
R_{1}\left(\boldsymbol{z}_{o}\right)+c_{2}P_{2}\left(\boldsymbol{z}_{o}\right)=c_{1}\sum\limits _{i=1}^{n}a_{(1)\,i}z^{i}+c_{2}\sum\limits _{i_{1},i_{2}=1}^{n}a_{(2)\,i_{1}i_{2}}z^{i_{1}}z^{i_{2}}
\]

To achieve the homogeneization of $R_{1}\left(\boldsymbol{z}_{o}\right)+c_{2}P_{2}\left(\boldsymbol{z}_{o}\right)$
a new variable $y^{1}$ is introduced.

It follows for $y^{1}\neq0$:

\begin{eqnarray*}
R_{2}\left(\boldsymbol{z}_{1}\right) & = & R_{2}\left(z_{0},y^{1}\right)\\
 & = & y^{1(2)}\left(R_{1}\left(\dfrac{z_{0}}{y^{1}}\right)+c_{2}P_{2}\left(\dfrac{z_{0}}{y^{1\,(2)}}\right)\right)\\
 & = & c_{1}\sum\limits _{i=1}^{n}a_{(1)\,i}z^{i}y^{1}+c_{2}\sum\limits _{i_{1},i_{2}=1}^{n}a_{(2)\,i_{1}i_{2}}z^{i_{1}}z^{i_{2}}.
\end{eqnarray*}

By continuous prolongation of $R_{2}$, it is set: 
\[
R_{2}\left(\boldsymbol{z}_{o},0\right)=c_{2}\sum\limits _{i_{1},i_{2}=1}^{n}a_{(2)\,i_{1}i_{2}}z^{i_{1}}z^{i_{2}}.
\]

In this step, the degree 1 coefficiented polynomial $R_{1}\left(z_{0}\right)=c_{1}P_{1}\left(z_{0}\right)$
attached to $\mathcal{H}_{w_{1},1}$ is transformed in a degree 2
homogeneous polynomial $y^{1}R_{1}\left(z_{0}\right)=c_{1}y^{1}P_{1}\left(z_{0}\right)$:
$y^{1}R_{1}\left(z_{0}\right)$ corresponds to the homogeneous polynomial
of the weighted $y_{1}$-vertex-augmented 1-uniform hypergraph $\overline{\mathcal{H}_{\overline{w_{1}},1}}$
built during the inflation phase in the hypergraph uniformisation
process. 

$y_{1}R_{1}\left(z_{0}\right)$ is then summed with the homogeneous
polynomial $c_{2}P_{2}$ attached to $\mathcal{H}_{w_{2},2}$ to get
an homogeneous polynomial of degree 2: $R_{2}\left(z_{1}\right)$.
$R_{2}\left(z_{1}\right)$ is the homogeneous polynomial of the merged
2-uniform hypergraph $\widehat{\mathcal{H}_{\widehat{w_{1}},1}}$
of $\overline{\mathcal{H}_{w_{1},1}}$ and $\mathcal{H}_{w_{2},2}$.

\uline{General case:} Supposing that $R_{k}\left(\boldsymbol{z}_{k-1}\right)$
is an homogeneous polynomial of degree $k$ that can be written as: 

\[
R_{k}\left(\boldsymbol{z}_{k-1}\right)=\sum\limits _{j=1}^{k}c_{j}\sum\limits _{i_{1},...,i_{j}=1}^{n}a_{(j)\,i_{1}...i_{j}}z^{i_{1}}...z^{i_{j}}\prod\limits _{l=j}^{k-1}y^{l},
\]

with the convention that: $\prod\limits _{l=j}^{k-1}y^{l}=1$ if $j>k-1$
and $\omega_{k-1}=z^{1},...,z^{n},y^{1},...,y^{k-1}$

$R_{k+1}$ is built as an homogeneous polynomial from the sum of $R_{k}$
and $c_{k+1}P_{k+1}$ by adding a variable $y_{k}$ and factorizing
by its $k+1$-th power. 

Therefore, for $y^{k-1}\neq0$:

\begin{eqnarray*}
R_{k+1}\left(\boldsymbol{z}_{k}\right) & = & y^{k\,(k+1)}\left(R_{k}\left(\dfrac{\boldsymbol{z}_{k-1}}{y^{k\,(k)}}\right)+c_{k+1}P_{k+1}\left(\dfrac{\boldsymbol{z}_{o}}{y^{k\,(k+1)}}\right)\right)\\
 & = & \left(\sum\limits _{j=1}^{k}c_{j}\sum\limits _{i_{1},...,i_{j}=1}^{n}a_{(j)\,i_{1}...i_{j}}z^{i_{1}}...z^{i_{j}}\prod\limits _{l=j}^{k-1}y^{l}\right)y^{k}\\
 &  & +c_{k+1}\sum\limits _{i_{1},...,i_{k+1}=1}^{n}a_{(k+1)\,i_{1}\,...\,i_{k+1}}z^{i_{1}}...z^{i_{k+1}}
\end{eqnarray*}

And for $y_{k}=0$, it is set by continuous prolongation: $R_{k+1}\left(\boldsymbol{z}_{k-1},0\right)=c_{k+1}\sum\limits _{i_{1},...,i_{k+1}=1}^{n}a_{(k+1)\,i_{1}\,...\,i_{k+1}}z^{i_{1}}...z^{i_{k+1}}.$

The fact that $P_{k+1}\left(z_{0}\right)$ can be null doesn't prevent
to do the step: the degree of $R_{k}$ will then be elevated of 1.

The interpretation of this step is similar to the one done for the
case $k=1$.

\paragraph*{Step ending phase:}

If $k$ equals $k_{\text{max}}-1$ the iterative part ends up, else
$k:=k+1$ and the next iteration is started.

\subsubsection*{Conclusion}

The algorithm build a family of homogeneous polynomial which is interpretable
in term of uniformisation of a hypergraph.

\subsection{Building an unnormalized symmetric tensor from this family of homogeneous
polynomials}

\textbf{Based on $R_{\mathcal{H}}$}

It is now valuable to interpret the built polynomials. 

The notation $\boldsymbol{w}_{(k)}=w_{(k)}^{1},...,w_{(k)}^{n+k-1}$
is used.
\begin{itemize}
\item The interpretation of $R_{1}$ is trivial as it holds the single element
hyperedges of the hypergraph.
\item $R_{2}$ is an homogeneous polynomial with $n+1$ variables of order
2.

\begin{eqnarray*}
R_{2}\left(\boldsymbol{z}_{1}\right) & = & c_{1}\sum\limits _{i=1}^{n}a_{(1)\,i}z^{i}y^{1}+c_{2}\sum\limits _{i_{1},i_{2}=1}^{n}a_{(2)\,i_{1}i_{2}}z^{i_{1}}z^{i_{2}}\\
 & = & c_{1}\sum\limits _{1\leqslant i\leqslant n}\alpha_{(1)\,i}z^{i}y^{1}+c_{2}\sum\limits _{1\leqslant i_{1}\leqslant i_{2}\leqslant n}\alpha_{(2)\,i_{1}i_{2}}z^{i_{1}}z^{i_{2}}
\end{eqnarray*}
It can be rewritten:
\[
R_{2}\left(\boldsymbol{w}_{(2)}\right)=\sum\limits _{i_{1},i_{2}=1}^{n+1}r_{(2)\,i_{1}i_{2}}w_{(2)}^{i_{1}}w_{(2)}^{i_{2}}
\]
where:
\begin{itemize}
\item for $1\leqslant i\leqslant n$: $w_{(2)}^{i}=z^{i}$ 
\item $w_{(2)}^{n+1}=y^{1}$ 
\item for $1\leqslant i_{1}\leqslant i_{2}\leqslant n$ and $\sigma\in\mathcal{S}_{2}$:
\[
r_{(2)\,\sigma\left(i_{1}\right)\sigma\left(i_{2}\right)}=\dfrac{c_{2}\alpha_{(2)\,i_{1}i_{2}}}{2!}=c_{2}a_{(2)\,i_{1}i_{2}}
\]
\item for $1\leqslant i\leqslant n$ and $\sigma\in\mathcal{S}_{2}$: 
\[
r_{(2)\,\sigma(i)\,\sigma(n+1)}=\dfrac{c_{1}\alpha_{(1)\,i}}{2!}=\dfrac{c_{1}a_{(1)\,i}}{2!}
\]
\item the other coefficients: $r_{(2)\,i_{1}i_{2}}$ are null.
\end{itemize}
Also $R_{2}$ can be linked to a symmetric hypercubic tensor of order
2 and dimension $n+1$.
\item $R_{k}$ is an homogeneous polynomial with $n+k-1$ variables of order
$k$. 

\begin{eqnarray*}
R_{k}\left(\boldsymbol{z}_{k-1}\right) & = & \sum\limits _{j=1}^{k}c_{j}\sum\limits _{i_{1},...,i_{j}=1}^{n}a_{(j)\,i_{1}...i_{j}}z^{i_{1}}...z^{i_{j}}\prod\limits _{l=j}^{k-1}y^{l}\\
 & = & \sum\limits _{j=1}^{k}c_{j}\sum\limits _{1\leqslant i_{1}\leqslant...\leqslant i_{j}\leqslant n}\alpha_{(j)\,i_{1}...i_{j}}z^{i_{1}}...z^{i_{j}}\prod\limits _{l=j}^{k-1}y^{l}
\end{eqnarray*}
with the convention that: $\prod\limits _{l=j}^{k-1}y^{l}=1$ if $j>k-1$.

It can be rewritten:
\[
R_{k}\left(\boldsymbol{w}_{(k)}\right)=\sum\limits _{i_{1},...,i_{k}=1}^{n+k-1}r_{(k)\,i_{1}\,...\,i_{k}}w_{(k)}^{i_{1}}...w_{(k)}^{i_{k}}
\]
where:
\begin{itemize}
\item for $1\leqslant i\leqslant n$: $w_{(k)}^{i}=z^{i}$
\item for $n+1\leqslant i\leqslant n+k-1$: $w_{(k)}^{i}=y^{i-n}$
\item for $1\leqslant i_{1}\leqslant...\leqslant i_{k}\leqslant n$, for
all $1\leqslant j\leqslant k-1$, for all $\sigma\in\mathcal{S}_{k}$: 
\begin{itemize}
\item $r_{(k)\,\sigma\left(i_{1}\right)...\sigma\left(i_{k}\right)}=\dfrac{c_{k}\alpha_{(k)\,i_{1}...i_{k}}}{k!}=c_{k}a_{(k)\,i_{1}...i_{k}}$ 
\item $r_{(k)\,\sigma\left(i_{1}\right)...\sigma\left(i_{j}\right)\sigma(n+j)...\sigma(n+k-1)}=\dfrac{c_{j}\alpha_{(j)\,i_{1}...i_{j}}}{k!}=\dfrac{j!}{k!}c_{j}a_{(j)\,i_{1}...i_{j}}$
\end{itemize}
\item the other elements $r_{(k)\,i_{1}\,...\,i_{k}}$ are null.
\end{itemize}
Also $R_{k}$ can be linked to a symmetric hypercubic tensor of order
$k$ and dimension $n+k-1$ written $\boldsymbol{R_{k}}$ whose elements
are $r_{(k)\,i_{1}\,...\,i_{k}}$.
\end{itemize}
The hypermatrix $\boldsymbol{R_{k_{\max}}}$ is called the unnormalized
tensor.

\subsection{Interpretation and choice of the coefficients for the unnormalized
tensor}

There are different ways of setting the coefficients $c_{1},...,c_{k_{\text{\text{max}}}}$
that are used. These coefficients can be seen as a way of normalizing
the tensors of e-adjacency generated from the $k$-uniform hypergraphs.

A first way of choosing them is to set them all equal to 1. In this
case no normalization occurs. The impact on the e-adjacency tensor
of the original hypergraph is that e-adjacency in hyperedges of size
$k$ have a weight of $k$ times bigger than the e-adjacency in hyperedges
of size 1.

A second way of choosing these coefficients is to consider that in
a $k$-uniform hypergraph, each hyperedge holds $k$ vertices and
then contributes to $k$ to the total degree. Representing this $k$-uniform
hypergraph by the $k-$adjacency degree normalized tensor $\mathcal{A}_{k}=\left(a_{(k)\,i_{1}...i_{k}}\right)_{1\leqslant i_{1},...,i_{k}\leqslant n}$,
it holds a revisited hand-shake lemma for $k$-uniform hypergraphs:
\begin{eqnarray*}
\sum\limits _{1\leqslant i_{1},...,i_{k}\leqslant n}a_{(k)\,i_{1}...i_{k}} & = & \sum\limits _{i=1}^{n}\sum\limits _{1\leqslant i_{2},...,i_{k}\leqslant n}a_{(k)\,ii_{2}...i_{k}}\\
 & = & \sum\limits _{i=1}^{n}d_{(k)\,i}\\
 & = & k\left|E_{k}\right|
\end{eqnarray*}

where $d_{(k)\,i}$ is the degree of the vertex $v_{i}$ in $\mathcal{H}_{k}$.

This formula can be extended to general hypergraphs: 
\begin{eqnarray*}
\left|E\right| & = & \sum\limits _{k=1}^{k_{\text{\text{max}}}}\left|E_{k}\right|\\
 & = & \sum\limits _{k=1}^{k_{\text{\text{max}}}}\dfrac{1}{k}\sum\limits _{i=1}^{n}d_{(k)\,i}\\
 & = & \sum\limits _{k=1}^{k_{\text{\text{max}}}}\dfrac{1}{k}\sum\limits _{1\leqslant i_{1},...,i_{k}\leqslant n}a_{(k)\,i_{1}...i_{k}}.
\end{eqnarray*}

For general hypergraphs, the tensor is of order $k_{\max}$. 
\begin{eqnarray*}
\sum\limits _{1\leqslant i_{1},...,i_{k_{\max}}\leqslant n+k_{\max}-1}r_{i_{1}...i_{k_{\max}}} & = & \sum\limits _{i=1}^{n+k_{\max}-1}\sum\limits _{1\leqslant i_{2},...,i_{k}\leqslant n+k_{\max}-1}r_{ii_{2}...i_{k_{\max}}}\\
 & = & \sum\limits _{i=1}^{n}\deg\left(v_{i}\right)+\sum\limits _{i=n+1}^{n+k_{\max}-1}\deg\left(y_{i}\right)
\end{eqnarray*}

The constructed tensor corresponds to the tensor of a $k_{\max}$-uniform
hypergraph with $n+k_{\max}-1$ vertices. It holds:
\[
\sum\limits _{1\leqslant i_{1},...,i_{k_{\max}}\leqslant n+k_{\max}-1}r_{i_{1}...i_{k_{\max}}}=k_{\max}\text{\ensuremath{\left|E\right|}}.
\]

And therefore: 

\[
\sum\limits _{1\leqslant i_{1},...,i_{k_{\max}}\leqslant n+k_{\max}-1}r_{i_{1}...i_{k_{\max}}}=\sum\limits _{k=1}^{k_{\text{\text{max}}}}\dfrac{k_{\max}}{k}\sum\limits _{1\leqslant i_{1},...,i_{k}\leqslant n}a_{(k)\,i_{1}...i_{k}}.
\]

Also $c_{k}=\dfrac{k_{\max}}{k}$ seems to be a good choice in this
case.

The final choice will be taken in the next paragraph to answer to
the required specifications on degrees. It will also fix the matrix
chosen for the uniform hypergraphs.

\subsection{Unnormalized e-adjacency tensor's expectations fulfillment}

\label{subsec:Gages-of-the}

\begin{gage}The tensor should be symmetric and its generation should
be simple.\end{gage}

\begin{proof}By construction the e-adjacency tensor is symmetric.
To generate it only one element has to be described for a given hyperedge
the other elements obtained by permutation of the indices being the
same. Also the built e-adjacency tensor is fully described by giving
$\left|E\right|$ elements.

\end{proof} 

\begin{gage}The unnormalized e-adjacency tensor keeps the overall
structure of the hypergraph.\end{gage}

\begin{proof}It is inherent to the way the tensor has been built:
the layer of level equal or under $j$ can be seen in the mode 1 at
the $n+j$-th component of the mode. To have only elements of level
$j$ one can project this mode so that it keeps only the first $n$
dimensions.\end{proof}

In the expectations of the built co-tensors listed in the paragraph
\ref{subsec:Expectations-for-a}, the e-adjacency tensor should allow
the retrieval of the degree of the vertices. It implies to fix the
choice of the $k$-adjacency tensors used to model each layer of the
hypergraph as well as the normalizing coefficient.

Let consider for $1\leqslant k\leqslant k_{\max}$, $2\leqslant l\leqslant k_{\max}$
and $1\leqslant i\leqslant n+k_{\max}-1$: 
\[
I_{k,l,i}=\left\{ \left(i_{1},...,i_{l}\right):i_{1}=i\land\forall j\in\left\llbracket 2,l\right\rrbracket :1\leqslant i_{j}\leqslant n+k-1\right\} 
\]
 and its subset of ordered tuples 
\[
OI_{k,l,i}=\left\{ \left(i_{1},...,i_{l}\right):\left(i_{1},...,i_{l}\right)\in I_{k,l,i}\land\left(l\geqslant2\implies\forall\left(j_{1},j_{2}\right)\in\left\llbracket 2,l\right\rrbracket ^{2}:j_{1}<j_{2}\implies i_{j_{1}}<i_{j_{2}}\right)\right\} .
\]

Then: 
\begin{eqnarray*}
\sum\limits _{\left(i_{1},...,i_{k_{\text{\text{max}}}}\right)\in I_{k_{\text{\text{max}}},k_{\text{\text{max}}},i}}r_{i_{1}...i_{k_{\text{max}}}} & = & \sum\limits _{\left(i_{1},...,i_{k_{\text{\text{max}}}}\right)\in OI_{k_{\text{\text{max}}},k_{\text{\text{max}}},i}}\left(k_{\text{\text{max}}}-1\right)!r_{i_{1}...i_{k_{\text{\text{max}}}}}\\
 & = & \sum\limits _{j=1}^{k_{\text{\text{max}}}}\sum\limits _{\left(i_{1},...,i_{j}\right)\in OI_{k_{\text{\text{max}}},j,i}}\dfrac{j!c_{j}a_{(j)\,i_{1}...i_{j}}}{k_{\max}}
\end{eqnarray*}

Hence, the expectation on the retrieval of degree imposes to set $c_{j}a_{(j)\,i_{1}...i_{j}}=\dfrac{k_{\max}}{j!}$
for the elements of $\mathcal{A}_{(j)}$ that are not null, which
is coherent with the usage of the coefficient $c_{j}=\dfrac{k_{\max}}{j}$
and of the degree-normalized tensor for $j$-uniform hypergraph where
not null elements are equals to: $\dfrac{1}{(j-1)!}$. This choice
is then made for the rest of the article.

\begin{rmk}

By choosing $c_{j}=\dfrac{k_{\max}}{j}$ and the degree-normalized
tensor for $j$-uniform hypergraph where not null elements are equals
to: $\dfrac{1}{(j-1)!}$, it follows that: $r_{i_{1}...i_{k_{\text{\text{max}}}}}=\dfrac{1}{\left(k_{\text{\text{max}}}-1\right)!}$
for all elements which is consistent with the fact that we have built
a $k_{\max}$-uniform hypergraph by filling each hyperedge with additional
vertices. This method is similar to make a plaster molding from a
footprint in the sand: the filling elements help reveal the structure
behind.

\end{rmk}

With this choice, writing $\mathds{1}_{e\in E}:\begin{cases}
1 & \text{if }e\in E\\
0 & \text{otherwise}
\end{cases}.$

\begin{eqnarray*}
\sum\limits _{\left(i_{1},...,i_{k_{\text{\text{max}}}}\right)\in I_{k_{\text{\text{max}}},k_{\text{\text{max}}},i}}r_{i_{1}...i_{k_{\text{\text{max}}}}} & = & \sum\limits _{j=1}^{k_{\text{\text{max}}}}\sum\limits _{\left(i_{1},...,i_{j}\right)\in OI_{k_{\text{\text{max}}},j,i}}\mathds{1}_{\left\{ v_{i_{1}},...,v_{i_{j}}\right\} \in E}.
\end{eqnarray*}

It follows immediately:

\begin{gage}The unnormalized e-adjacency tensor allows the retrieval
of the degree of the vertices of the hypergraph.\end{gage}

\begin{proof}Defining for $1\leqslant i\leqslant n$: $d_{i}=\deg\left(v_{i}\right)$.

From the previous choice, it follows that: 
\begin{eqnarray*}
\sum\limits _{\left(i_{1},...,i_{k_{\text{\text{max}}}}\right)\in I_{k_{\text{\text{max}}},k_{\text{\text{max}}},i}}r_{i_{1}...i_{k_{\text{\text{max}}}}} & = & \sum\limits _{\left(i_{1},...,i_{k_{\text{\text{max}}}}\right)\in OI_{k_{\text{\text{max}}},k_{\text{\text{max}}},i}}\left(k_{\text{\text{max}}}-1\right)!r_{i_{1}...i_{k_{\text{\text{max}}}}}\\
 & = & \sum\limits _{j=1}^{k_{\text{\text{max}}}}\sum\limits _{\left(i_{1},...,i_{j}\right)\in OI_{k_{\text{\text{max}}},j,i}}\dfrac{j!c_{j}a_{(j)\,i_{1}...i_{j}}}{k_{\max}}\\
 & = & \sum\limits _{j=1}^{k_{\text{\text{max}}}}\sum\limits _{\left(i_{1},...,i_{j}\right)\in OI_{k_{\text{\text{max}}},j,i}}\mathds{1}_{\left\{ v_{i_{1}},...,v_{i_{j}}\right\} \in E}\\
 & = & \deg\left(v_{i}\right)
\end{eqnarray*}
 as $\dfrac{j!c_{j}a_{(j)\,i_{1}...i_{j}}}{k_{\max}}=1$ only for
hyperedges where $v_{i}$ is in it (and they are counted only once
for each hyperedge).

\end{proof}

\begin{gage}The unnormalized e-adjacency tensor allows the retrieval
of the cardinality of the hyperedges.\end{gage}

\begin{proof}Defining $d_{n+i}=\left|\left\{ e\,:\,\left|e\right|\leqslant i\right\} \right|$
for $1\leqslant i\leqslant k_{\max}$.

\begin{eqnarray*}
\sum\limits _{\left(i_{1},...,i_{k_{\text{\text{max}}}}\right)\in I_{k_{\text{max}},k_{\text{\text{max}}},n+i}}r_{i_{1}...i_{k_{\text{max}}}} & = & \sum\limits _{j=1}^{i}\sum\limits _{1\leqslant l_{1}<...<l_{j}\leqslant n}\left(k_{\text{\text{max}}}-1\right)!\dfrac{j!c_{j}a_{(j)\,i_{1}...i_{j}}}{k_{\max}!}\\
 & = & \sum\limits _{j=1}^{i}\sum\limits _{1\leqslant l_{1}<...<l_{j}\leqslant n}\mathds{1}_{\left\{ v_{l_{1}},...,v_{l_{j}}\right\} \in E}
\end{eqnarray*}

due to the fact that $r_{n+i\,i_{2}...i_{k_{\text{max}}}}\neq0$ if
and only if it exists at most $i$ indices $i_{2}$ to $i_{k_{\max}}$
that are between 1 and $n$ which correspond to vertices in the general
hypergraph and the other indices have value strictly above $n$ which
represent additional vertices.

It follows:

\begin{eqnarray*}
\sum\limits _{\left(i_{1},...,i_{k_{\text{\text{max}}}}\right)\in I_{k_{\text{\text{max}}},k_{\text{\text{max}}},n+i}}r_{i_{1}...i_{k_{\text{\text{max}}}}} & = & d_{n+i}.
\end{eqnarray*}

We set: $d_{n+k_{\max}}=\left|E\right|$.

Also $d_{n+j}$ allows to retrieve the number of hyperedges of cardinality
equal or less than $j$.

Therefore: 
\begin{itemize}
\item for $2\leqslant j\leqslant k_{\max}$: $\left|\left\{ e\,:\,\left|e\right|=j\right\} \right|=d_{n+j}-d_{n+j-1}$ 
\item for $j=1$: $\left|\left\{ e\,:\,\left|e\right|=1\right\} \right|=d_{n+1}$
\end{itemize}
An other way of keeping directly the cardinality of the layer $k_{\max}$
in the e-adjacency tensor would be to store it in an additional variable
$y_{k_{\max}}$.

\end{proof}

\begin{gage}The e-adjacency tensor is unique up to the labeling of
the vertices for a given hypergraph.

Reciprocally, given the e-adjacency tensor and the number of vertices,
the associated hypergraph is unique.

\end{gage}

\begin{proof}

Given a hypergraph, the process of decomposition in layers is bijective
as well as the formalization by degree normalized $k$-adjacency tensor.
Given the coefficients, the process of building the e-adjacency homogeneous
polynomial is also unique and the reversion to a symmetric cubic tensor
is unique.

Given the e-adjacency tensor and the number of vertices, as the e-adjacency
tensor is symmetric, up to the labeling of the vertices, considering
that the first $n$ variables encoded in the e-adjacency tensor in
each direction represents variables associated to vertices of the
hypergraph and the last variables in each direction encode the information
of cardinality. Therefore it is possible to retrieve each layer of
the hypergraph uniquely and consequently the whole hypergraph.

\end{proof}

\subsection{Interpretation of the e-adjacency tensor}

The general hypergraph layer decomposition allows to retrieve uniform
hypergraphs that can be separately modeled by e-adjacency (or equivalently
$k$-adjacency) tensor of $k$-uniform hypergraphs. We have shown
that filling these different layers with additional vertices allow
to uniformize the original hypergraph by keeping the e-adjacency.
The coefficients used in the iterative process has to be seen as weights
on the hyperedges of the final $k_{\max}$-uniform hypergraph: these
coefficients allow to retrieve the right number of edges from the
uniformized hypergraph tensor so that it corresponds to the number
of edges of the original hypergraph.

The additional dimensions in the e-adjacency tensor allows to retrieve
the cardinality of the hyperedges. By decomposing a hypergraph in
a set of uniform hypergraphs the hyperedges are quotiented depending
on their cardinality.

The iterative approach principle is illustrated in Figure\ref{Fig: Iterative approach}:
vertices that are added at each level give indication on the original
cardinality of the hyperedge it is added to.

\begin{figure}
\begin{center}\includegraphics[scale=0.3]{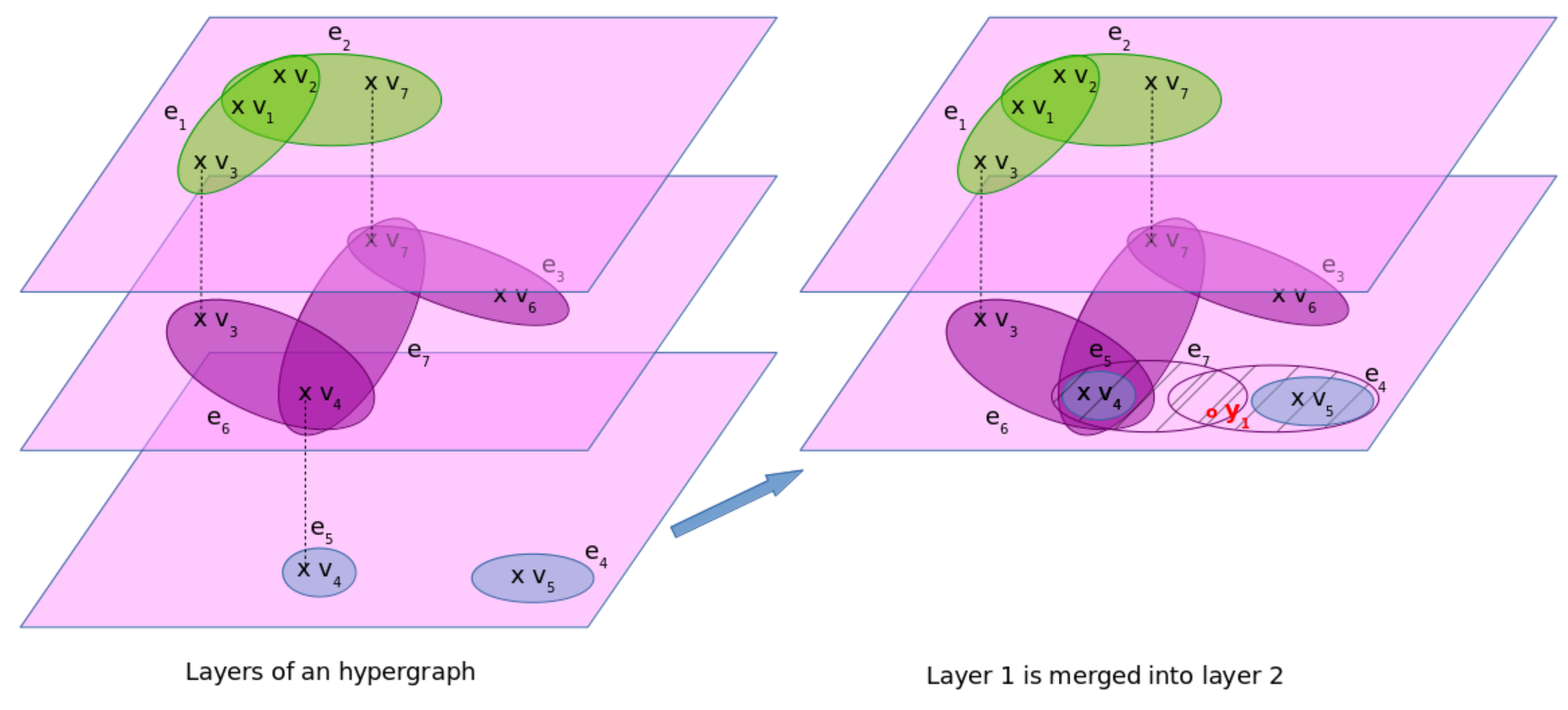}\end{center}

\begin{center}\includegraphics[scale=0.3]{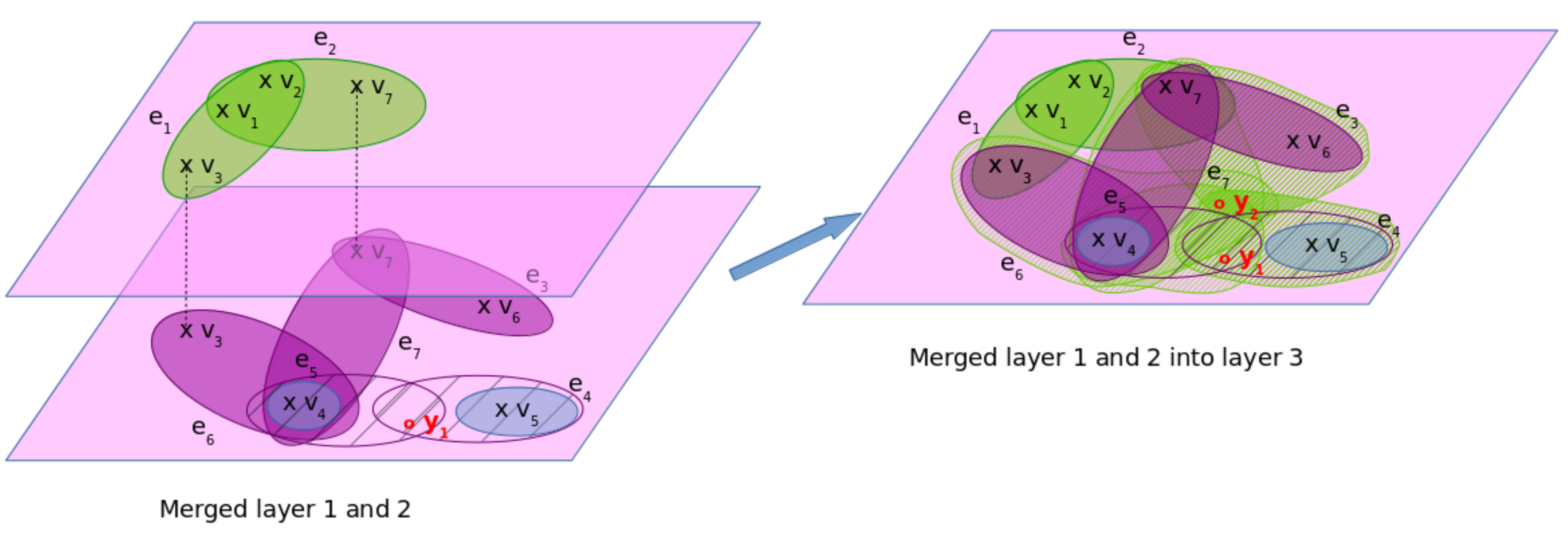}\end{center}

\caption{Illustration of the iterative approach concept on an example}

In the iterative approach the layers of level $n$ and $n+1$ are
merged together into the layer $n+1$ by adding a filling vertex to
the hyperedges of the layer $n$. On this example, during the first
step the layer 1 and 2 are merged to form a 2-uniform hypergraph.
In the second step, the 2-uniform hypergraph obtained in the first
step is merged to the layer 3 to obtain a 3-uniform hypergraph.\label{Fig: Iterative approach}
\end{figure}

Viewed in an other way, e-adjacency hypermatrix of uniform hypergraph
don't need an extra dimension as the hyperedges are uniform, therefore
there is no ambiguity. Adding an extra variable allows to capture
the dimensionality of each hyperedge meanwhile preventing any ambiguity
on the meaning of each element of the tensor.

\section{Some comments on the e-adjacency tensor}

\label{sec:Some-comments-on}

\subsection{The particular case of graphs}

\label{sec:The-particular-case}

As a graph $G=\left(V,E\right)$ with $\left|V\right|=n$ can always
be seen a $2$-uniform hypergraph $\mathcal{H}_{G}$, the approach
given in this paragraph should allow to retrieve in a coherent way
the spectral theory for normal graphs.

The hypergraph that contains the 2-uniform hypergraph is then composed
of an empty level 1 layer and a level 2 layer that contains only $\mathcal{H}_{G}$.

Let $A$ be the adjacency matrix of $G$. The e-adjacency tensor of
the corresponding 2-uniform hypergraph is of order 2 and obtained
from $A$ by multiplying it by $c_{2}$ and adding one row and one
column of zero. Therefore the e-adjacency tensor of the two level
of the corresponding hypergraph is: $\mathcal{A}$=
$\left(\begin{array}{c|c}
c_2A & 0\\
\hline
0 & 0
\end{array}\right)$

Also as an eigenvalue $\lambda$ of $\mathcal{A}$ seen as a matrix
is a solution of the characteristic polynomial $\det\left(\mathcal{A}-\lambda I\right)=0\Leftrightarrow-\lambda\det\left(c_{2}A-\lambda I\right)=0\Leftrightarrow-\lambda c_{2}^{n}\det\left(A-\dfrac{\lambda}{c_{2}}I\right)=0$,
the eigenvalues of $\mathcal{A}$ are $c_{2}$times the ones of $A$
and one additional 0 eigenvalue. This last eigenvalue is attached
to the eigenvector $\left(0...0\,1\right)^{T}$. The other eigenvalues
have same eigenvectors than $A$ with one additional $n+1$ component
which is 0.

\begin{proof}

Let consider $Y=\left(\begin{array}{c}
X\\
y
\end{array}\right)$ with $X$ vector of dimension n. Let $\lambda$ be an eigenvalue
of $\mathcal{A}$ and $Y$ an eigenvector of $\mathcal{A}$

$\mathcal{A}Y=\lambda Y\Leftrightarrow\mathcal{A}\left(\begin{array}{c}
X\\
y
\end{array}\right)=\lambda\left(\begin{array}{c}
X\\
y
\end{array}\right)\Leftrightarrow\left(c_{2}A-\lambda I_{n}\right)X=0\land-\lambda y=0\Leftrightarrow X$ eigenvalue of $A$ attached to $\dfrac{\lambda}{c_{2}}$. $y$ can
be always taken equals to 0 to fit the second condition.

\end{proof}

Therefore globally there is no change in the spectra: the eigenvectors
hold, the eigenvalues of the initial graph are multiplied by the normalizing
coefficient.

\subsection{e-adjacency tensor and DNF}

Let $\mathcal{H}=\left(V,E\right)$ be a hypergraph, $\mathcal{A}$
its e-adjacency tensor and $\tilde{R}_{k_{\max}}$ the reduced attached
homogeneous polynomial.

\[
\tilde{R}_{k_{\max}}\left(\boldsymbol{w}_{\left(k_{\max}\right)}\right)=\sum\limits _{1\leqslant i_{1}<...<i_{k}\leqslant n+k_{\max}-1}\tilde{r}_{\left(k_{\max}\right)\,i_{1}\,...\,i_{k_{\max}}}w_{(k_{\max})}^{i_{1}}...w_{(k_{\max})}^{i_{k_{\max}}}
\]

with $\tilde{r}_{\left(k_{\max}\right)\,i_{1}\,...\,i_{k_{\max}}}=k_{\max}!r_{\left(k_{\max}\right)\,i_{1}\,...\,i_{k_{\max}}}$

The variables $\left(w_{(k_{\max})}^{i}\right)_{1\leqslant i\leqslant n+k_{\max}-1}$
of $R_{k_{\text{\text{\text{max}}}}}$ can be considered as boolean
variables and therefore $R_{k_{\text{\text{\text{max}}}}}$ can be
considered as a boolean function. The variables $w_{(k_{\max})}^{i}$
for $1\leqslant i\leqslant n$ captures the belonging of a vertex
to the considered hyperedge and for $n+1\leqslant i\leqslant n+k_{\max}-1$
to the layer of level $i-n$.

This boolean homogeneous polynomial $P_{B}\left(\boldsymbol{w}_{\left(k_{\max}\right)}\right)$
is in full disjunctive normal form as it is a sum of products of boolean
variables holding only once in each product and where the conjunctive
terms are made of $k_{\max}$ variables.

$P_{B\,k_{\max}}\left(z_{0}\right)=P_{B}\left(z_{0},\underset{k_{\max}-1}{\underbrace{0,...,0}}\right)$
allows to retrieve the part of the full DNF which stores hyperedges
of size $k_{\max}$.

$P_{B\,k_{\max}-1}\left(z_{0}\right)=P_{B}\left(z_{0},\underset{k_{\max}-2}{\underbrace{0,...,0}},1\right)-P_{B}\left(z_{0},\underset{k_{\max}-1}{\underbrace{0,...,0}}\right)$
allows to retrieve the full DNF which stores hyperedges of size $k_{\max}-1$.

$P_{B\,k_{\max}-j}\left(z_{0}\right)=P_{B}\left(z_{0},\underset{k_{\max}-j-1}{\underbrace{0,...,0}},\underset{j}{\underbrace{1,...,1}}\right)-P_{B}\left(z_{0},\underset{k_{\max}-j-2}{\underbrace{0,...,0}},\underset{j-1}{\underbrace{1,...,1}}\right)$
allows to retrieve the full DNF which stores hyperedges of size $k_{\max}-j$.

Stopping at $P_{B\,1}\left(z_{0}\right)=P_{B}\left(z_{0},\underset{k_{\max}-1}{\underbrace{1,...,1}}\right)-P_{B}\left(z_{0},0,\underset{k_{\max}-2}{\underbrace{1,...,1}}\right)$
allows to retrieve the full DNF which stores hyperedges of size 1.

Considering the adjacency matrix of \citet{zhou2007learning} of this
unweighted hypergraph, it holds that $\boldsymbol{w}_{0}^{\top}A\boldsymbol{w}_{0}$
can be considered as a boolean homogeneous polynomial in full disjunctive
form where the conjunctive terms are composed of only two variables,
which shows if it was necessary that this approach is a pairwise approximation
of the e-adjacency tensor.

The homogeneous polynomial attached to \citet{banerjee2017spectra}
tensor can be mapped to a boolean polynomial function by considering
the same term elements with coefficient being 1 when the original
homogeneous polynomial has a non-zero coefficient and 0 otherwise.
This boolean function nonetheless is no more in DNF. Reducing it to
DNF yields to the expression of $P_{B}\left(\boldsymbol{z}_{0},\underset{k_{\max}-1}{\underbrace{1,...,1}}\right)$.

\subsection{Some first results on spectral analysis}

\subsubsection{Eigenvalues of tensors}

The definitions and results of this sub-section are based on \citet{qi2017siam}.
Proofs can be consulted in this reference.

Let $T_{m,n}$ be the set of all real tensors of order m and dimension
$n$ and $S_{m,n}$ the subset of $T_{m,n}$ where all tensors are
symmetric, i.e. invariant under a permutation of the indices of its
elements.

Let $\mathcal{A}=\left(a_{i_{1}...i_{m}}\right)\in T_{m,n}$. Let
$\mathcal{I}\in T_{m,n}$ designates the identity tensor. 

\begin{defin}

A number $\lambda\in\mathbb{C}$ is an \textbf{eigenvalue} of $\mathcal{A}$
if it exists a nonzero vector $x\in\mathbb{C}^{n}$ such that: 
\begin{equation}
\forall i\in\left\llbracket 1,n\right\rrbracket ,\,\left(\mathcal{A}x^{m-1}\right)_{i}=\lambda x_{i}^{m-1}\label{eq:eigenvalue}
\end{equation}

In this case $x$ is called an \textbf{eigenvector} of $\mathcal{A}$
associated with the eigenvalue $\lambda$ and $\left(x,\lambda\right)$
is called an \textbf{eigenpair} of $\mathcal{A}$.

The set of all eigenvalues of $\mathcal{A}$ is called the spectrum
of $\mathcal{A}$. The largest modulus of all eigenvalues is called
the spectra radius of $\mathcal{A}$, denoted as $\rho\left(\mathcal{A}\right)$.

\end{defin}

\begin{rmk}

By writing $x^{[m-1]}=\left(x_{i}^{m-1}\right)_{1\leqslant i\leqslant n}$,
\ref{eq:eigenvalue} can be written:

\[
\mathcal{A}x^{m-1}=\lambda x^{[m-1]}.
\]

\end{rmk}

\begin{prop}

Let $\alpha$ and $\beta$ be two real numbers.

If $\left(\lambda,x\right)$ is an eigenpair of $\mathcal{A}$, then
$\left(\alpha\lambda+\beta,x\right)$ is an eigenpair of $\alpha\mathcal{A}+\beta\mathcal{I}$

\end{prop}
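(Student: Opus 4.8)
The plan is to verify directly that the vector $x$, which is nonzero by hypothesis (it is an eigenvector of $\mathcal{A}$), satisfies the defining eigenvalue relation \eqref{eq:eigenvalue} for the tensor $\alpha\mathcal{A}+\beta\mathcal{I}$ with associated value $\alpha\lambda+\beta$. The whole argument rests on a single structural observation: for a fixed vector $x\in\mathbb{C}^{n}$, the contraction map $\mathcal{B}\mapsto\mathcal{B}x^{m-1}$ is linear in the entries of the order-$m$ tensor $\mathcal{B}$. Indeed, writing out components, $\left(\mathcal{B}x^{m-1}\right)_{i}=\sum_{i_{2},\dots,i_{m}=1}^{n}b_{i\,i_{2}\dots i_{m}}\,x_{i_{2}}\cdots x_{i_{m}}$, which is manifestly linear in the coefficients $b_{i\,i_{2}\dots i_{m}}$; hence $\left(\alpha\mathcal{A}+\beta\mathcal{I}\right)x^{m-1}=\alpha\,\mathcal{A}x^{m-1}+\beta\,\mathcal{I}x^{m-1}$.

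First I would evaluate the action of the identity tensor. Using the definition of $\mathcal{I}$ as the diagonal tensor whose only nonzero entries are $i_{j\dots j}=1$, the contraction collapses to $\left(\mathcal{I}x^{m-1}\right)_{i}=x_{i}^{m-1}$ for every $i$, so that $\mathcal{I}x^{m-1}=x^{[m-1]}$ in the notation of the preceding remark. Next I would substitute the eigenpair hypothesis $\mathcal{A}x^{m-1}=\lambda x^{[m-1]}$ together with this identity into the linearity relation above, obtaining
\[
\left(\alpha\mathcal{A}+\beta\mathcal{I}\right)x^{m-1}=\alpha\lambda\,x^{[m-1]}+\beta\,x^{[m-1]}=\left(\alpha\lambda+\beta\right)x^{[m-1]}.
\]

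Finally, since $x\neq 0$, this equation is exactly the statement that $\left(\alpha\lambda+\beta,x\right)$ is an eigenpair of $\alpha\mathcal{A}+\beta\mathcal{I}$, which concludes the argument. There is no real obstacle here; the only point requiring a moment of care is confirming that the action of the diagonal identity tensor yields the componentwise power vector $x^{[m-1]}$ rather than, say, $x$ itself, so that the $\beta\mathcal{I}$ term produces precisely the additive shift $\beta\,x^{[m-1]}$ on the right-hand side and combines cleanly with $\alpha\lambda\,x^{[m-1]}$.
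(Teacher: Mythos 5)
Your proof is correct: the linearity of the contraction $\mathcal{B}\mapsto\mathcal{B}x^{m-1}$ in the tensor entries, the identity $\mathcal{I}x^{m-1}=x^{[m-1]}$ for the diagonal identity tensor, and the substitution of $\mathcal{A}x^{m-1}=\lambda x^{[m-1]}$ together give exactly $\left(\alpha\mathcal{A}+\beta\mathcal{I}\right)x^{m-1}=\left(\alpha\lambda+\beta\right)x^{[m-1]}$ with $x\neq0$, which is the defining relation. The paper itself offers no proof for this proposition (it defers all proofs of this subsection to the cited reference of Qi), and your direct verification is precisely the standard argument one would find there, so nothing is missing.
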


\begin{defin}

A H-eigenvalue is an eigenvalue $\lambda$ of $\mathcal{A}$ that
has a real eigenvector $x$ associated to it. $x$ is called in this
case an $H$-eigenvector.

\end{defin}

\begin{prop}

A H-eigenvalue is real.

\end{prop}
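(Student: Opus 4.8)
The plan is to extract $\lambda$ as an explicit ratio of two real quantities by picking out a single nonzero component of the eigenvector. Recall that an $H$-eigenvalue comes equipped, by definition, with an associated \emph{real} eigenvector $x\in\mathbb{R}^{n}$, which is moreover nonzero since it is an eigenvector. The tensor $\mathcal{A}\in S_{m,n}$ has real entries $a_{i_{1}\dots i_{m}}$.

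First I would write the eigenvalue equation \eqref{eq:eigenvalue} componentwise,
\[
\left(\mathcal{A}x^{m-1}\right)_{i}=\sum\limits _{i_{2},\dots,i_{m}=1}^{n}a_{ii_{2}\dots i_{m}}\,x_{i_{2}}\cdots x_{i_{m}}=\lambda\,x_{i}^{m-1},
\]
valid for every $i\in\left\llbracket 1,n\right\rrbracket$. Since every $a_{ii_{2}\dots i_{m}}$ is real and every $x_{i_{j}}$ is real, the left-hand side $\left(\mathcal{A}x^{m-1}\right)_{i}$ is a real number for each $i$; likewise $x_{i}^{m-1}$ is real.

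Next, because $x\neq 0$, there exists an index $i_{0}$ with $x_{i_{0}}\neq 0$. For that index, $x_{i_{0}}^{m-1}$ is a nonzero real number, so I may divide to obtain
\[
\lambda=\dfrac{\left(\mathcal{A}x^{m-1}\right)_{i_{0}}}{x_{i_{0}}^{m-1}},
\]
a quotient of two real numbers with nonzero denominator, hence $\lambda\in\mathbb{R}$. This establishes the claim. There is no genuine obstacle here: the only point requiring a moment's care is the choice of a component with $x_{i_{0}}\neq 0$ (to guarantee a nonzero denominator), which is immediate from $x$ being an eigenvector and therefore nonzero. One could alternatively argue by conjugation---taking complex conjugates of the equation and using that $\overline{\mathcal{A}}=\mathcal{A}$ and $\overline{x}=x$ forces $\overline{\lambda}x_{i}^{m-1}=\lambda x_{i}^{m-1}$ for all $i$, whence $\overline{\lambda}=\lambda$---but the direct componentwise division is the shortest route.
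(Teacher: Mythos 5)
Your proof is correct. The paper itself offers no argument for this proposition --- it is quoted from \citet{qi2017siam} with the remark that proofs can be consulted there --- and your componentwise division is exactly the standard argument: since tensors in $T_{m,n}$ are real and an $H$-eigenvector is real and nonzero, picking $i_{0}$ with $x_{i_{0}}\neq 0$ exhibits $\lambda$ as a ratio of two real numbers with nonzero denominator. One minor remark: you invoke $\mathcal{A}\in S_{m,n}$, but symmetry plays no role; the argument goes through verbatim for any real tensor in $T_{m,n}$, which is the generality in which the paper states the result.
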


A real eigenvalue is not necessarily a H-eigenvalue.

The following theorem holds for symmetric tensors:

\begin{theo}Let $\mathcal{A}\in\mathcal{S}_{m,n}$. If $m$ is even
then $\mathcal{A}$ always have H-eigenvalues and $\mathcal{A}$ is
positive definite (resp. semi-definite) if and only if its smallest
H-eigenvalue $\lambda_{H_{\min}}\left(\mathcal{A}\right)$ is positive
(resp. non-negative).

\end{theo}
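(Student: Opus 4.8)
The plan is to pass to the homogeneous polynomial $f(x)=\mathcal{A}x^{m}=\sum_{i_{1},\dots,i_{m}}a_{i_{1}\dots i_{m}}x_{i_{1}}\cdots x_{i_{m}}$ associated with $\mathcal{A}$ and to read off the H-eigenvalues as the constrained critical values of $f$. Recall that $\mathcal{A}$ being positive definite (resp. semi-definite) means precisely that $f(x)>0$ (resp. $f(x)\geqslant0$) for all $x\in\mathbb{R}^{n}\setminus\{0\}$. First I would differentiate: since $\mathcal{A}$ is symmetric, a direct application of the product rule together with the index symmetry $a_{\sigma(i_{1})\dots\sigma(i_{m})}=a_{i_{1}\dots i_{m}}$ gives $\partial f/\partial x_{j}=m\,(\mathcal{A}x^{m-1})_{j}$, so the H-eigenvalue equation $(\mathcal{A}x^{m-1})_{i}=\lambda x_{i}^{m-1}$ is exactly the stationarity condition for $f$ under the constraint $g(x):=\sum_{i}x_{i}^{m}=1$, whose gradient is $\partial g/\partial x_{j}=m\,x_{j}^{m-1}$.

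Next I would establish existence. Because $m$ is even, $g(x)=\sum_{i}x_{i}^{m}$ is nonnegative, vanishes only at $x=0$, and tends to $+\infty$ as $\|x\|\to\infty$; hence the level set $S=\{x:\sum_{i}x_{i}^{m}=1\}$ is the unit sphere of the $\ell_{m}$-norm and is compact. The continuous function $f$ therefore attains its minimum on $S$ at some real $x^{\ast}$. By Lagrange multipliers there is a real $\lambda$ with $m\,(\mathcal{A}(x^{\ast})^{m-1})_{i}=\lambda\,m\,(x^{\ast}_{i})^{m-1}$, i.e. $(x^{\ast},\lambda)$ is a real eigenpair; this proves $\mathcal{A}$ has an H-eigenvalue. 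Contracting the eigenequation with $x^{\ast}$ and using $g(x^{\ast})=1$ yields the identity $f(x^{\ast})=\mathcal{A}(x^{\ast})^{m}=\lambda\sum_{i}(x^{\ast}_{i})^{m}=\lambda$.

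Then I would identify $\min_{S}f$ with the smallest H-eigenvalue. Given any H-eigenvalue $\mu$ with real eigenvector $y\neq0$, evenness of $m$ forces $g(y)>0$, so $\tilde{y}=y/g(y)^{1/m}\in S$; the eigenequation is invariant under scaling of the eigenvector, so $\tilde{y}$ is again an eigenvector for $\mu$, and the contraction identity gives $\mu=f(\tilde{y})\geqslant\min_{S}f$. Combined with the previous paragraph this shows $\lambda_{H_{\min}}(\mathcal{A})=\min_{S}f$ and in particular that a smallest H-eigenvalue exists. Finally, for the definiteness equivalence I would invoke homogeneity: for $x\neq0$ set $s=g(x)^{1/m}>0$, so $x/s\in S$ and $f(x)=s^{m}f(x/s)$ with $s^{m}>0$; hence $f(x)>0$ for all $x\neq0$ holds if and only if $f>0$ on $S$, if and only if $\min_{S}f>0$, if and only if $\lambda_{H_{\min}}(\mathcal{A})>0$, and identically with $>$ replaced by $\geqslant$ for the semi-definite case.

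The step requiring the most care is the compactness claim, since it is exactly here that the hypothesis ``$m$ even'' is indispensable: for odd $m$ the set $\{\sum_{i}x_{i}^{m}=1\}$ is noncompact and $g$ can vanish on nonzero vectors, so neither attainment of the minimum nor the normalization $\tilde{y}\in S$ is available, and the correspondence between H-eigenvalues and constrained critical values collapses. A secondary subtlety is checking that the global minimizer is a genuine constrained critical point so that Lagrange multipliers applies, which holds because $\nabla g$ never vanishes on $S$.
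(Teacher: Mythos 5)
Your proof is correct. Note, however, that the paper itself contains no proof of this theorem: it is stated in the subsection whose results ``are based on \citet{qi2017siam}'' with the explicit remark that ``Proofs can be consulted in this reference,'' so there is no in-paper argument to compare against. Your variational argument is, in fact, essentially the classical proof from that cited literature (going back to Qi's original work on eigenvalues of real supersymmetric tensors): minimize $f(x)=\mathcal{A}x^{m}$ over the level set $S=\left\{ x\,:\,\sum_{i}x_{i}^{m}=1\right\} $, which is compact precisely because $m$ is even; observe that symmetry of $\mathcal{A}$ gives $\nabla f=m\,\mathcal{A}x^{m-1}$, so Lagrange multipliers (legitimate since $\nabla g$ never vanishes on $S$) turns the minimizer into a real eigenpair; use the contraction identity $\mathcal{A}x^{m}=\lambda\sum_{i}x_{i}^{m}$ to identify the multiplier with the value of $f$; and use evenness again to normalize an arbitrary real eigenvector onto $S$, so that $\min_{S}f$ is exactly $\lambda_{H_{\min}}\left(\mathcal{A}\right)$, after which the definiteness equivalence follows from $m$-homogeneity of $f$. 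All of these steps are sound, and you correctly flag the two places where the parity hypothesis is indispensable; the only implicit assumption is the standard definition of positive (semi-)definiteness of a tensor as positivity (nonnegativity) of the form $\mathcal{A}x^{m}$ on $\mathbb{R}^{n}\setminus\left\{ 0\right\} $, which the paper never states but which is the definition used in the cited reference.
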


\begin{theo}

Let $\mathcal{A}\in T_{m,n}$ be a nonnegative tensor. Then $\mathcal{A}$
has at least one H-eigenvalue and $\lambda_{H_{\max}}\left(\mathcal{A}\right)=\rho\left(\mathcal{A}\right)$.
Furthermore $\lambda_{H_{\max}}\left(\mathcal{A}\right)$ has a non-negative
H-eigenvector.

\end{theo}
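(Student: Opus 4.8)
The plan is to prove this Perron--Frobenius statement for nonnegative tensors by first producing one nonnegative eigenpair through a fixed-point argument and then showing the eigenvalue so obtained is the spectral radius. I would work on the compact convex simplex $\Delta=\{x\in\mathbb{R}^{n}:x_i\geqslant0,\ \sum_i x_i=1\}$, exploiting that $x\mapsto\mathcal{A}x^{m-1}$ is nonnegative on $\Delta$ and homogeneous of degree $m-1$. Treating first a strictly positive tensor, one checks $(\mathcal{A}x^{m-1})_i>0$ for all $i$ and every $x\in\Delta$ (the repeated index on any nonzero coordinate already contributes a positive term), so
\[
F(x)_i=\frac{(\mathcal{A}x^{m-1})_i^{1/(m-1)}}{\sum_{k}(\mathcal{A}x^{m-1})_k^{1/(m-1)}}
\]
maps $\Delta$ continuously into itself; Brouwer's theorem gives a fixed point $x^{*}$, and collecting the common denominator $c$ yields $c^{m-1}(x^{*})^{[m-1]}=\mathcal{A}(x^{*})^{m-1}$, i.e. an eigenpair $(\lambda,x^{*})$ with $\lambda=c^{m-1}>0$ and $x^{*}>0$. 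For a general nonnegative $\mathcal{A}$ I would perturb to $\mathcal{A}_{\varepsilon}=\mathcal{A}+\varepsilon\mathcal{J}$ with $\mathcal{J}$ the all-ones tensor, apply the positive case to get $(\lambda_{\varepsilon},x_{\varepsilon})$ with $x_{\varepsilon}\in\Delta$, and pass to the limit $\varepsilon\to0$: by compactness of $\Delta$ and boundedness of the $\lambda_{\varepsilon}$ a subsequence converges to an eigenpair $(\lambda^{*},x^{*})$ of $\mathcal{A}$ with $\lambda^{*}\geqslant0$ and $x^{*}\geqslant0$, $x^{*}\neq0$. This delivers both the existence of an H-eigenvalue and the nonnegative H-eigenvector asserted in the theorem.

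It remains to identify $\lambda^{*}$ with $\rho(\mathcal{A})$. As $\lambda^{*}$ is a real eigenvalue we have $\lambda_{H_{\max}}(\mathcal{A})\geqslant\lambda^{*}\geqslant0$, while $\lambda_{H_{\max}}(\mathcal{A})\leqslant\rho(\mathcal{A})$ holds by definition; the substance is the reverse bound $\rho(\mathcal{A})\leqslant\lambda^{*}$. For any eigenpair $(\mu,z)$ with $z\in\mathbb{C}^{n}$, taking moduli in $\mathcal{A}z^{m-1}=\mu z^{[m-1]}$ and using nonnegativity of $\mathcal{A}$ with the triangle inequality gives the entrywise comparison $|\mu|\,|z|^{[m-1]}\leqslant\mathcal{A}|z|^{m-1}$, with $|z|=(|z_i|)_i\geqslant0$. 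I would then prove a Collatz--Wielandt lemma: whenever $w\geqslant0$, $w\neq0$ and $\nu\geqslant0$ satisfy $\nu\,w^{[m-1]}\leqslant\mathcal{A}w^{m-1}$, then $\nu\leqslant\lambda^{*}$. On a strictly positive tensor this is immediate from the positive eigenvector: picking $i_0$ maximising $w_{i_0}/x^{*}_{i_0}=:t$, so $w\leqslant t x^{*}$, the $i_0$-th coordinate of $\nu w^{[m-1]}\leqslant\mathcal{A}w^{m-1}\leqslant t^{m-1}\mathcal{A}(x^{*})^{m-1}=t^{m-1}\lambda^{*}(x^{*})^{[m-1]}$ forces $\nu\leqslant\lambda^{*}$. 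Applied to $\nu=|\mu|$, $w=|z|$ this yields $|\mu|\leqslant\lambda^{*}$ for every eigenvalue, so $\rho(\mathcal{A})\leqslant\lambda^{*}$, and together with the trivial inequalities $\lambda_{H_{\max}}(\mathcal{A})=\lambda^{*}=\rho(\mathcal{A})$.

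The hard part will be the two intertwined points that both stem from not assuming irreducibility. For a merely nonnegative $\mathcal{A}$ the map $F$ need not preserve the interior of $\Delta$, so the limiting eigenvector may have vanishing coordinates and the clean max-ratio argument degenerates; moreover the Collatz--Wielandt lemma is only transparent in the positive case. The remedy is to carry out existence, positivity of the eigenvector and the comparison lemma entirely on the positive perturbations $\mathcal{A}_{\varepsilon}$, where they all hold and give $\lambda^{*}_{\varepsilon}=\rho(\mathcal{A}_{\varepsilon})$, and then to recover the general statement by letting $\varepsilon\to0$, using compactness of $\Delta$ and the continuity and monotonicity of $\rho$ as a function of the tensor entries. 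Establishing that continuity in the tensor setting, rather than merely invoking the matrix analogue, is the step demanding the most care.
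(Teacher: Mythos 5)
Your proof is essentially correct, but there is nothing in the paper to compare it against: this theorem is one of the results the paper quotes verbatim from \citet{qi2017siam}, with the explicit remark that ``proofs can be consulted in this reference''. What you have reconstructed blind is, in substance, the standard Perron--Frobenius argument for nonnegative tensors from that literature: Brouwer's fixed point on the simplex for the positive case, the perturbation $\mathcal{A}_{\varepsilon}=\mathcal{A}+\varepsilon\mathcal{J}$ with a compactness limit for the general case, and a Collatz--Wielandt comparison to identify the limiting eigenvalue with $\rho(\mathcal{A})$. Two refinements would tighten it. First, the step you single out as the hardest --- continuity of $\rho$ in the tensor entries --- is not needed at all: you already have $\lambda^{*}\leqslant\rho(\mathcal{A})$ because the limit pair $\left(\lambda^{*},x^{*}\right)$ is an eigenpair of $\mathcal{A}$, and for the reverse bound you can take any eigenvalue $\mu$ of $\mathcal{A}$ with eigenvector $z$, write $\left|\mu\right|\left|z\right|^{[m-1]}\leqslant\mathcal{A}\left|z\right|^{m-1}\leqslant\mathcal{A}_{\varepsilon}\left|z\right|^{m-1}$, and apply your Collatz--Wielandt lemma to the \emph{positive} tensor $\mathcal{A}_{\varepsilon}$ (where it is valid, since $x_{\varepsilon}>0$) to get $\left|\mu\right|\leqslant\lambda_{\varepsilon}$ for every $\varepsilon>0$; letting $\varepsilon\to0$ along your subsequence gives $\rho(\mathcal{A})\leqslant\lambda^{*}$, and the two bounds close the argument with no monotonicity or continuity of $\rho$ ever invoked. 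Second, the boundedness of the $\lambda_{\varepsilon}$, which you assert before extracting the convergent subsequence, deserves its one-line justification: evaluating the eigenvalue equation at the largest coordinate of $x_{\varepsilon}$ yields $\lambda_{\varepsilon}\leqslant\max\limits_{i}\sum\limits_{i_{2},\dots,i_{m}=1}^{n}\left(a_{ii_{2}\dots i_{m}}+\varepsilon\right)$, which stays bounded as $\varepsilon\to0$.
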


\begin{defin}

A tensor is called an \textbf{essentially nonnegative tensor} if all
its off-diagonal entries are nonnegative.

A tensor is called a Z-tensor if its off-diagonal entries are nonpositive.

\end{defin}

\begin{theo}

Essentially nonnegative tensor and Z-tensors always have H-eigenvalues.

\end{theo}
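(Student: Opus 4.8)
The plan is to reduce both statements to the Perron--Frobenius-type theorem for nonnegative tensors established just above (a nonnegative tensor has at least one H-eigenvalue), by shifting the given tensor along its diagonal with a suitable multiple of the identity tensor $\mathcal{I}$ and then transferring the resulting H-eigenpair back using the proposition relating the spectra of $\mathcal{A}$ and $\alpha\mathcal{A}+\beta\mathcal{I}$. The key point is that, because $(\mathcal{I}x^{m-1})_{i}=x_{i}^{m-1}$, adding $\beta\mathcal{I}$ modifies only the diagonal entries $a_{ii\ldots i}$ and leaves every off-diagonal entry untouched.

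First I would treat the essentially nonnegative case. Let $\mathcal{A}=(a_{i_{1}\ldots i_{m}})\in T_{m,n}$ have all off-diagonal entries nonnegative, while the diagonal entries $a_{ii\ldots i}$ may be negative. Choose a real $\beta$ large enough that $a_{ii\ldots i}+\beta\geqslant 0$ for every $i$. Then $\mathcal{B}=\mathcal{A}+\beta\mathcal{I}$ has nonnegative diagonal entries and unchanged (hence nonnegative) off-diagonal entries, so $\mathcal{B}$ is a nonnegative tensor. By the nonnegative-tensor theorem, $\mathcal{B}$ admits an H-eigenvalue $\mu=\lambda_{H_{\max}}(\mathcal{B})$ with a real (indeed nonnegative) H-eigenvector $x$. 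Writing $\mathcal{A}=1\cdot\mathcal{B}+(-\beta)\mathcal{I}$ and applying the proposition with $\alpha=1$ and shift $-\beta$, the pair $(\mu-\beta,x)$ is an eigenpair of $\mathcal{A}$; since $x$ is real and $\mu$ is real, $\mu-\beta$ is a genuine H-eigenvalue of $\mathcal{A}$.

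For a Z-tensor $\mathcal{A}$, whose off-diagonal entries are nonpositive, I would instead set $\mathcal{C}=\beta\mathcal{I}-\mathcal{A}$ with $\beta$ large enough that $\beta-a_{ii\ldots i}\geqslant 0$ for all $i$. The off-diagonal entries of $\mathcal{C}$ are $-a_{i_{1}\ldots i_{m}}\geqslant 0$ and its diagonal entries are nonnegative by the choice of $\beta$, so $\mathcal{C}$ is again nonnegative and has an H-eigenpair $(\mu,x)$ with $x$ real. Writing $\mathcal{A}=(-1)\mathcal{C}+\beta\mathcal{I}$ and applying the proposition with $\alpha=-1$ and shift $\beta$ gives the eigenpair $(\beta-\mu,x)$ of $\mathcal{A}$ with real eigenvector, hence an H-eigenvalue of $\mathcal{A}$. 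Equivalently, one may simply observe that $-\mathcal{A}$ is essentially nonnegative and invoke the first part, since $\lambda$ is an H-eigenvalue of $-\mathcal{A}$ exactly when $-\lambda$ is one of $\mathcal{A}$.

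The argument is largely bookkeeping once the nonnegative-tensor theorem and the affine proposition are available; the only steps needing real care are checking that the shift $\beta\mathcal{I}$ acts purely on the diagonal entries, so that nonnegativity of the off-diagonal part is preserved, and that the back-transfer keeps the eigenvector real, so that the recovered eigenvalue is an H-eigenvalue and not merely a complex eigenvalue. I expect no hypothesis on the parity of $m$ nor symmetry of $\mathcal{A}$ to be needed, since the nonnegative-tensor theorem is invoked without them.
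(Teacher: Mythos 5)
Your proof is correct, but note that the paper itself contains no proof of this theorem to compare against: the entire subsection on tensor eigenvalues is imported from \citet{qi2017siam}, with the explicit remark that ``Proofs can be consulted in this reference.'' Your argument supplies what the paper omits, and it is the standard one (essentially the argument in the cited reference as well): shift an essentially nonnegative $\mathcal{A}$ to the nonnegative tensor $\mathcal{A}+\beta\mathcal{I}$, invoke the Perron--Frobenius-type theorem stated just above to obtain a real eigenpair $(\mu,x)$, and transfer it back with the affine-spectrum proposition (taking $\alpha=1$ and shift $-\beta$); then dispose of Z-tensors either via $\beta\mathcal{I}-\mathcal{A}$ or, more cleanly, by noting that $-\mathcal{A}$ is essentially nonnegative and that $(\lambda,x)$ is an eigenpair of $-\mathcal{A}$ iff $(-\lambda,x)$ is one of $\mathcal{A}$. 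All the ingredients you invoke are stated in the paper with exactly the hypotheses you use (no symmetry of $\mathcal{A}$, no parity condition on $m$), and the two points you single out as delicate are indeed the only ones that matter: that $\beta\mathcal{I}$ perturbs only the diagonal entries, and that the transferred eigenvector remains real so the recovered eigenvalue is a genuine H-eigenvalue rather than merely a complex one. The only thing worth making fully explicit in a written version is the fact underlying the first point, namely that the identity tensor here is the diagonal tensor with $(\mathcal{I})_{i\ldots i}=1$ and all other entries zero, which is precisely what makes $\left(\mathcal{I}x^{m-1}\right)_{i}=x_{i}^{m-1}$ hold and justifies the claim that the shift leaves every off-diagonal entry untouched.
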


\begin{defin}

Let $\mathcal{A}=\left(a_{i_{1}...i_{m}}\right)\in T_{m,n}$.

The diagonal elements of $\mathcal{A}$ are the elements $a_{i...i}$
for $i\in\left\llbracket 1,n\right\rrbracket $.

The off-diagonal elements of $\mathcal{A}$ are the other elements.

\end{defin}

An important result is the following:

\begin{prop}

Let $\mathcal{A}\in T_{m,n}$. Then the eigenvalues of $\mathcal{A}$
belongs to the union of $n$disks in $\mathbb{C}$. These $n$disks
have the diagonal entries of $\mathcal{A}$ as their centers and the
sums of the absolute values of the off-diagonal entries as their radii.

\end{prop}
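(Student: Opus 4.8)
The plan is to mimic the classical Gershgorin disk argument, the only genuinely new feature being that the eigenvector components enter nonlinearly as a degree-$(m-1)$ monomial rather than linearly. First I would fix an eigenpair $\left(\lambda,x\right)$ of $\mathcal{A}$ with $x\neq0$ and select an index $k\in\left\llbracket 1,n\right\rrbracket$ realizing $\left|x_{k}\right|=\max\limits_{1\leqslant i\leqslant n}\left|x_{i}\right|$; since $x$ is nonzero this maximum is strictly positive.

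Next I would write out the $k$-th component of the eigenvalue equation \eqref{eq:eigenvalue}, namely
\[
\sum\limits_{i_{2},\ldots,i_{m}=1}^{n}a_{k\,i_{2}\ldots i_{m}}x_{i_{2}}\cdots x_{i_{m}}=\lambda x_{k}^{m-1},
\]
and isolate the diagonal contribution $a_{k\ldots k}x_{k}^{m-1}$, moving it to the left-hand side to obtain
\[
\left(\lambda-a_{k\ldots k}\right)x_{k}^{m-1}=\sum\limits_{\left(i_{2},\ldots,i_{m}\right)\neq\left(k,\ldots,k\right)}a_{k\,i_{2}\ldots i_{m}}x_{i_{2}}\cdots x_{i_{m}}.
\]
Then I would take absolute values on both sides, apply the triangle inequality, and use $\left|x_{i_{j}}\right|\leqslant\left|x_{k}\right|$ for every $j$ (the defining property of $k$) to bound the right-hand side by $\left(\sum\limits_{\left(i_{2},\ldots,i_{m}\right)\neq\left(k,\ldots,k\right)}\left|a_{k\,i_{2}\ldots i_{m}}\right|\right)\left|x_{k}\right|^{m-1}$. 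Dividing through by $\left|x_{k}\right|^{m-1}>0$ yields
\[
\left|\lambda-a_{k\ldots k}\right|\leqslant\sum\limits_{\left(i_{2},\ldots,i_{m}\right)\neq\left(k,\ldots,k\right)}\left|a_{k\,i_{2}\ldots i_{m}}\right|,
\]
which is precisely the statement that $\lambda$ lies in the disk centered at the diagonal entry $a_{k\ldots k}$ whose radius is the sum of the moduli of the off-diagonal entries having first index $k$. Since the index $k$ depends on the eigenpair but always lies in $\left\llbracket 1,n\right\rrbracket$, every eigenvalue belongs to at least one of the $n$ disks, giving the claimed inclusion in their union.

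The argument is essentially routine; the only point requiring care is that, because the eigenvector appears as the monomial $x_{i_{2}}\cdots x_{i_{m}}$ and not linearly, one cannot factor out a single eigenvector component as in the matrix case. The maximal-modulus choice of $k$ is exactly what repairs this: it lets each of the $m-1$ factors be bounded by $\left|x_{k}\right|$ simultaneously, reproducing on the right the homogeneity $\left|x_{k}\right|^{m-1}$ that cancels against the factor on the left. I expect this homogeneity bookkeeping to be the only delicate step, and I note that no symmetry assumption on $\mathcal{A}$ is needed for the conclusion.
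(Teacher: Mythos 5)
Your proof is correct and is essentially the argument the paper relies on: the paper itself defers the proof to \citet{qi2017siam}, but its subsequent remark (inequality \eqref{eq:bound_eigenval}) records precisely your key step, namely choosing the index $i$ of maximal modulus $\left|x_{i}\right|$ and bounding $\left|\lambda-a_{i\ldots i}\right|$ by the sum of the moduli of the off-diagonal entries in that slice. Your handling of the degree-$(m-1)$ homogeneity — bounding each factor $\left|x_{i_{j}}\right|$ by $\left|x_{k}\right|$ and cancelling $\left|x_{k}\right|^{m-1}$ — is exactly the standard Gershgorin adaptation used there, so there is nothing to add.
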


\begin{rmk}

The proof shows that if $\left(\lambda,x\right)$ is an eigenpair
of $\mathcal{A}=\left(a_{i_{1}...i_{m}}\right)$, it holds for $i$
such that: $\left|x_{i}\right|=\max\left\{ \left|x_{j}\right|:1\leqslant j\leqslant n\right\} $:

\begin{equation}
\left|\lambda-a_{i...i}\right|\leqslant\sum\limits _{\substack{i_{2},...,i_{m}=1\\
\delta_{ii_{2}...i_{m}=0}
}
}^{n}\left|a_{ii_{2}...i_{m}}\right|\label{eq:bound_eigenval}
\end{equation}

\end{rmk}

\begin{corol}

If $\mathcal{A}$ is a nonnegative tensor of $T_{m,n}$ with an equal
row sum $r$. Then $r$ is the spectral radius of $\mathcal{A}$.

\end{corol}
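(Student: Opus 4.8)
The plan is to prove the two inequalities $r\leqslant\rho\left(\mathcal{A}\right)$ and $\rho\left(\mathcal{A}\right)\leqslant r$ separately. The first is obtained by exhibiting $r$ explicitly as an eigenvalue, and the second follows directly from the Gershgorin-type disk bound stated in the remark preceding this corollary. Note first that since $\mathcal{A}$ is nonnegative, each row sum $r=\sum_{i_{2},...,i_{m}=1}^{n}a_{ii_{2}...i_{m}}$ is a sum of nonnegative entries, so $r\geqslant0$ and in particular $\left|r\right|=r$; this is what allows an eigenvalue equal to $r$ to contribute modulus exactly $r$ to the spectrum.

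For the lower bound, I would test the eigenvalue equation \eqref{eq:eigenvalue} on the all-ones vector $x=\left(1,...,1\right)^{\top}$. By the equal-row-sum hypothesis, $\left(\mathcal{A}x^{m-1}\right)_{i}=\sum_{i_{2},...,i_{m}=1}^{n}a_{ii_{2}...i_{m}}\cdot1\cdots1=r$ for every $i\in\left\llbracket 1,n\right\rrbracket $, while $r\,x_{i}^{m-1}=r$. Hence $\left(\mathcal{A}x^{m-1}\right)_{i}=r\,x_{i}^{m-1}$ for all $i$, so $\left(r,x\right)$ is an eigenpair of $\mathcal{A}$ with a real eigenvector, i.e. an $H$-eigenpair. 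Since $\rho\left(\mathcal{A}\right)$ is the largest modulus of all eigenvalues, we get $\rho\left(\mathcal{A}\right)\geqslant\left|r\right|=r$.

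For the upper bound, let $\left(\lambda,x\right)$ be an arbitrary eigenpair and choose an index $i$ with $\left|x_{i}\right|=\max\left\{ \left|x_{j}\right|:1\leqslant j\leqslant n\right\} $. Applying the bound \eqref{eq:bound_eigenval} gives $\left|\lambda-a_{i...i}\right|\leqslant\sum_{\substack{i_{2},...,i_{m}=1\\\delta_{ii_{2}...i_{m}}=0}}^{n}\left|a_{ii_{2}...i_{m}}\right|$. Because $\mathcal{A}$ is nonnegative, each absolute value equals the entry itself and $a_{i...i}=\left|a_{i...i}\right|$, so by the triangle inequality $\left|\lambda\right|\leqslant\left|\lambda-a_{i...i}\right|+a_{i...i}\leqslant\sum_{\substack{i_{2},...,i_{m}=1\\\delta_{ii_{2}...i_{m}}=0}}^{n}a_{ii_{2}...i_{m}}+a_{i...i}=\sum_{i_{2},...,i_{m}=1}^{n}a_{ii_{2}...i_{m}}=r$, where the last equality recombines the off-diagonal sum with the diagonal entry into the full row sum. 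Thus every eigenvalue satisfies $\left|\lambda\right|\leqslant r$, so $\rho\left(\mathcal{A}\right)\leqslant r$, and together with the previous paragraph this yields $\rho\left(\mathcal{A}\right)=r$.

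The argument is essentially a tensor Perron--Frobenius computation, so I do not expect a genuine obstacle; the only point requiring care is the sign bookkeeping in the last step, namely using nonnegativity to discard the absolute values and checking that the off-diagonal radius plus the diagonal center reconstitutes precisely the row sum $r$. The conceptual key is simply recognizing that selecting the index of maximal $\left|x_{j}\right|$ is exactly the hypothesis under which \eqref{eq:bound_eigenval} is valid, while the equal-row-sum condition is what makes the all-ones vector an eigenvector realizing the bound.
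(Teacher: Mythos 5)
Your proof is correct and is essentially the argument the paper intends: the paper itself defers the proof to \citet{qi2017siam}, but it places the corollary immediately after the disk bound \eqref{eq:bound_eigenval} precisely so that it follows from your two inequalities, namely the upper bound obtained by adding the nonnegative diagonal entry back to the off-diagonal radius, and the lower bound obtained by checking that the all-ones vector is an H-eigenvector with eigenvalue $r$ under the equal-row-sum hypothesis. No gaps.
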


\subsubsection{Spectral analysis of e-adjacency tensor}

Let $\mathcal{H}=\left(V,E\right)$ be a general hypergraph of e-adjacency
tensor $\mathcal{A}_{\mathcal{H}}=\left(a_{i_{1}...i_{k_{\max}}}\right)$

In the e-adjacency tensor $\mathcal{A}_{\mathcal{H}}$ built, the
diagonal entries are equal to zero. As all elements of $\mathcal{A}_{\mathcal{H}}$
are all non-negative real numbers and as we have shown that: 
\[
\sum\limits _{\substack{i_{2},...,i_{k_{\max}}=1\\
\delta_{ii_{2}...i_{k_{\max}}=0}
}
}^{n+k_{\max}-1}a_{ii_{2}...i_{k_{\max}}}=\begin{cases}
d_{i} & \text{if}\,1\leqslant i\leqslant n\\
d_{n+i} & \text{if}\,1\leqslant i\leqslant k_{\max}-1.
\end{cases}
\]

It follows:

\begin{theo}The e-adjacency tensor $\mathcal{A}_{\mathcal{H}}=\left(a_{i_{1}...i_{k_{\max}}}\right)$
of a general hypergraph $\mathcal{H}=\left(V,E\right)$ has its eigenvalues
$\lambda$ such that: 
\begin{equation}
\left|\lambda\right|\leqslant\max\left(\Delta,\Delta^{\star}\right)\label{eq:bound_max_degree_layer}
\end{equation}
 where $\Delta=\underset{1\leqslant i\leqslant n}{\max}\left(d_{i}\right)$
and $\Delta^{\star}=\underset{1\leqslant i\leqslant k_{\max}-1}{\max}\left(d_{n+i}\right)$

\end{theo}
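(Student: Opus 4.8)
The plan is to apply the tensor Gershgorin-type localisation result stated just above (the Proposition together with its accompanying Remark, equation \eqref{eq:bound_eigenval}), specialised to the e-adjacency tensor $\mathcal{A}_{\mathcal{H}}$, whose order is $k_{\max}$ and whose dimension is $n+k_{\max}-1$. The whole argument is a direct consequence of that localisation once the two structural facts recalled immediately before the statement are plugged in, so there is no genuinely hard step; the work is bookkeeping the index ranges correctly.

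First I would fix an arbitrary eigenpair $(\lambda,x)$ of $\mathcal{A}_{\mathcal{H}}$ and let $i$ be an index realising $\left|x_{i}\right|=\max\left\{ \left|x_{j}\right|:1\leqslant j\leqslant n+k_{\max}-1\right\}$; such an $i$ exists because $x\neq0$. The Remark then yields
\[
\left|\lambda-a_{i\ldots i}\right|\leqslant\sum\limits _{\substack{i_{2},\ldots,i_{k_{\max}}=1\\ \delta_{ii_{2}\ldots i_{k_{\max}}}=0}}^{n+k_{\max}-1}\left|a_{ii_{2}\ldots i_{k_{\max}}}\right|.
\]
Next I would use that every diagonal entry $a_{i\ldots i}$ of $\mathcal{A}_{\mathcal{H}}$ is zero, so the left-hand side is simply $\left|\lambda\right|$, and that every entry is a non-negative real, so each $\left|a_{ii_{2}\ldots i_{k_{\max}}}\right|$ equals $a_{ii_{2}\ldots i_{k_{\max}}}$ and the right-hand side is exactly the off-diagonal mode-$1$ row sum at index $i$. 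Invoking the row-sum identity established just before the statement, this sum equals $d_{i}$ when $1\leqslant i\leqslant n$, and equals $d_{n+j}$ (where I write the mode-$1$ index as $n+j$ with $1\leqslant j\leqslant k_{\max}-1$) when $n+1\leqslant i\leqslant n+k_{\max}-1$.

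Finally I would conclude by a case split on the maximizing index. If $i\leqslant n$ then $\left|\lambda\right|\leqslant d_{i}\leqslant\Delta$; if $i>n$, writing $j=i-n\in\left\llbracket 1,k_{\max}-1\right\rrbracket$ gives $\left|\lambda\right|\leqslant d_{n+j}\leqslant\Delta^{\star}$. In either case $\left|\lambda\right|\leqslant\max\left(\Delta,\Delta^{\star}\right)$, which is \eqref{eq:bound_max_degree_layer}, and since $(\lambda,x)$ was arbitrary the bound holds for every eigenvalue. The only point demanding care---rather than a real obstacle---is that the localisation Remark is phrased for a generic dimension, so one must consistently run the mode-$1$ index from $1$ to $n+k_{\max}-1$, ensuring the $k_{\max}-1$ added layer vertices are included in the disk analysis and correctly matched to the degrees $d_{n+j}$ rather than dropped.
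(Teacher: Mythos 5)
Your proposal is correct and follows essentially the same route as the paper's own proof: apply the Gershgorin-type localisation of the Remark (equation \eqref{eq:bound_eigenval}), use that the diagonal entries of $\mathcal{A}_{\mathcal{H}}$ vanish and all entries are non-negative, then invoke the mode-$1$ row-sum identity to bound $\left|\lambda\right|$ by $d_{i}$ or $d_{n+j}$ and hence by $\max\left(\Delta,\Delta^{\star}\right)$. Your write-up is in fact slightly more careful than the paper's, since you explicitly fix the maximizing component of the eigenvector and perform the case split on whether that index corresponds to an original vertex or an added layer vertex.
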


\begin{proof}

From \ref{eq:bound_eigenval} we can write as $a_{i...i}=0$ and $a_{ii_{2}...i_{k_{\max}}}$are
non-negative numbers, that for all $\lambda$ it holds: $\left|\lambda\right|\leqslant\sum\limits _{\substack{i_{2},...,i_{k_{\max}}=1\\
\delta_{ii_{2}...i_{k_{\max}}=0}
}
}^{n+k_{\max}-1}a_{ii_{2}...i_{k_{\max}}}$ and thus writing $\Delta=\underset{1\leqslant i\leqslant n}{\max}\left(d_{i}\right)$
and $\Delta^{\star}=\underset{1\leqslant i\leqslant k_{\max}-1}{\max}\left(d_{n+i}\right)$,
it holds: $\left|\lambda\right|\leqslant\max\left(\Delta,\Delta^{\star}\right)$

\end{proof}

\begin{prop}Let $\mathcal{H}$ be a $r$-regular $r$-uniform hypergraph.
Then this maximum is reached.

\end{prop}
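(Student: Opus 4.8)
The plan is to observe that for an $r$-uniform hypergraph the whole iterative construction collapses, and then to certify the bound by an explicit eigenpair. First I would note that since $\mathcal{H}$ is $r$-uniform we have $k_{\max}=r$ and $E_j=\emptyset$ for every $j<r$, so $P_1=\dots=P_{r-1}=0$; the recursion $R_k=c_kP_k+R_{k-1}y^{k-1}$ then gives $R_1=\dots=R_{r-1}=0$ and hence $R_r=c_rP_r$ with $c_r=k_{\max}/r=1$. Consequently the e-adjacency tensor $\mathcal{A}_{\mathcal{H}}$, although of order $r$ and dimension $n+r-1$, has nonzero entries only among indices in $\llbracket 1,n\rrbracket$: every entry carrying one of the auxiliary indices $n+1,\dots,n+r-1$ vanishes, and on the first $n$ coordinates $\mathcal{A}_{\mathcal{H}}$ coincides with the degree-normalized $r$-adjacency tensor.

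Next I would evaluate the two quantities in the bound. By $r$-regularity, $\Delta=\max_{1\leqslant i\leqslant n}d_i=r$. Since every hyperedge has cardinality exactly $r$, there is no hyperedge of cardinality at most $i$ for $1\leqslant i\leqslant r-1$, so $d_{n+i}=0$ for all such $i$ and therefore $\Delta^{\star}=0$. Hence $\max(\Delta,\Delta^{\star})=r$, and it remains only to exhibit an eigenvalue equal to $r$.

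For this I would test the nonzero vector $x$ defined by $x_i=1$ for $1\leqslant i\leqslant n$ and $x_i=0$ for $n+1\leqslant i\leqslant n+r-1$, and verify $\mathcal{A}_{\mathcal{H}}x^{r-1}=\lambda x^{[r-1]}$ coordinate by coordinate. For $1\leqslant i\leqslant n$ the only surviving summands are those with all of $i_2,\dots,i_r$ in $\llbracket 1,n\rrbracket$ (otherwise the entry vanishes or some factor $x_{i_j}$ is zero), so $(\mathcal{A}_{\mathcal{H}}x^{r-1})_i=\sum_{i_2,\dots,i_r=1}^{n+r-1}a_{ii_2\dots i_r}=d_i=r=r\,x_i^{r-1}$, using the degree-retrieval guarantee established earlier. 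For $n+1\leqslant i\leqslant n+r-1$ every entry with leading index $i$ vanishes, so the left-hand side is $0=r\cdot 0=r\,x_i^{r-1}$ (here $r\geqslant 2$). Thus $(x,r)$ is an eigenpair, $r$ is an H-eigenvalue, and by the preceding theorem no eigenvalue exceeds $r$ in modulus; therefore the bound $|\lambda|\leqslant\max(\Delta,\Delta^{\star})=r$ is attained.

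The step I would watch most carefully is the bookkeeping that forces $\Delta^{\star}=0$ together with the vanishing of all entries involving auxiliary indices; once the construction is seen to reduce to the pure $r$-uniform tensor on the original vertices, the constant-row-sum phenomenon (equivalently the Corollary on nonnegative tensors with equal row sum $r$) already makes $r$ the spectral radius, and the explicit eigenvector merely certifies it. The degenerate case $r=1$, where the tensor has order one, can be treated separately or excluded.
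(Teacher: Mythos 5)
Your proof is correct and follows the same overall strategy as the paper --- compute $\Delta=r$ and $\Delta^{\star}=0$, then certify that the bound is attained by exhibiting an explicit eigenpair with $\lambda=r$ --- but your choice of eigenvector differs from the paper's in a way that matters. The paper takes the all-ones vector $\boldsymbol{1}$ of dimension $n+k_{\max}-1$ and verifies the eigenvalue equation only for the coordinates $1\leqslant i\leqslant n$; at an auxiliary coordinate $i\in\left\llbracket n+1,n+r-1\right\rrbracket$ the left-hand side $\left(\mathcal{A}_{\mathcal{H}}\boldsymbol{1}^{r-1}\right)_{i}$ equals $0$ (no hyperedge has cardinality below $r$, so every entry with a leading auxiliary index vanishes), while the right-hand side is $r\cdot1^{r-1}=r$, so the defining equation fails there and $\boldsymbol{1}$ is, strictly speaking, not an eigenvector. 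Your vector --- equal to $1$ on the original vertices and $0$ on the $r-1$ auxiliary ones --- satisfies the equation at every coordinate (for $r\geqslant2$, since $0^{r-1}=0$), and you correctly ground this in the observation that the recursion collapses to $R_{r}=c_{r}P_{r}$ with $c_{r}=1$, so that all entries involving auxiliary indices vanish; you also flag the degenerate case $r=1$, which the paper ignores. One small caveat: your closing appeal to the equal-row-sum corollary is not immediate for the full tensor, whose auxiliary rows sum to $0$ rather than $r$; it applies to the sub-tensor on the original vertices, and transferring spectral-radius statements from a sub-tensor to the full tensor is not automatic for tensors. But that remark is inessential in your write-up: the explicit eigenpair verification carries the proof on its own, and in fact repairs the gap in the paper's own argument.
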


\begin{proof}In this case: 
\[
\forall1\leqslant i\leqslant n,\,d_{i}=\Delta=r
\]
 and 
\[
\Delta^{\star}=0,
\]
also: 
\[
\max\left(\Delta,\Delta^{\star}\right)=r.
\]

Considering $\lambda=r$ and the vector $\boldsymbol{1}$ which components
are only 1, $(r,\boldsymbol{1})$ is an eigenpair of $\mathcal{A_{H}}$
as forall $1\leqslant i\leqslant n$: 

\[
\begin{array}{crcl}
 & \sum\limits _{i_{2},...,i_{k_{\max}}=1}^{n+k_{\max}-1}a_{ii_{2}...i_{k_{\max}}}x_{i_{2}}...x_{i_{k_{\max}}} & = & \lambda x_{i}^{k_{\max}-1}\\
\Leftrightarrow & \sum\limits _{i_{2},...,i_{k_{\max}}=1}^{n+k_{\max}-1}a_{ii_{2}...i_{k_{\max}}} & = & r\\
\Leftrightarrow & d_{i} & = & r
\end{array}
\]

\end{proof}

\begin{rmk}

We see that this bound includes $\Delta^{\star}$ which can be close
to the number of hyperedges, for instance where the hyperedges would
be constituted of only one vertex per hyperedge except one hyperedge
with $k_{\max}\neq1$ vertices in it.

\end{rmk}

\section{Evaluation}

We have gathered some key features of both the e-adjacency tensor
proposed by \citet{banerjee2017spectra} - written $\mathcal{B}_{\mathcal{H}}$
and the one constructed in this article - written $\mathcal{A}_{\mathcal{H}}$.
The constructed tensor has same order. The dimension of $\mathcal{A_{H}}$
is $k_{\max}-1$ bigger than $\mathcal{B_{\mathcal{H}}}$ ($n-1$
in the worst case). The way $\mathcal{A}_{\mathcal{H}}$ is built
uses potentially $\dfrac{\left(n+k_{\max}-1\right)!}{(n-1)!n^{k_{\max}}}$
times less elements than for $\mathcal{B}_{\mathcal{H}}$ ($O\left(\dfrac{n!}{n^{n}}\right)$
in the worst case). The number of non-nul elements filled in $\mathcal{A_{\mathcal{H}}}$
for a given hypergraph is $\left(1+\dfrac{k_{\max}-1}{n}\right)^{k_{\max}}$
times the number of elements of $\mathcal{B_{\mathcal{H}}}$ ($O\left(4^{n}\right)$
times in the worst case). But the number of elements to be filled
to have full description of a hyperedge of size $s$ by permutation
of indices due to the symmetry of the tensor is only 1 in the case
of $\mathcal{A}_{H}$, which is $\dfrac{1}{p_{s}\left(k_{\max}\right)}$
times less than for a hyperedge stored in $\mathcal{B_{\mathcal{H}}}$.
The minimum number of elements needed to be described the other being
obtained by permutation is $\dfrac{1}{\left|E\right|}\sum\limits _{s=1}^{k_{\max}}p_{s}\left(k_{\max}\right)\left|E_{s}\right|$
bigger for $\mathcal{B}_{\mathcal{H}}$ than for $\mathcal{A}_{\mathcal{H}}$.
Moreover the value of the elements in $\mathcal{B_{\mathcal{H}}}$
varies with the cardinality of the hyperedge; in $\mathcal{A_{H}}$,
any element has same value. Both tensors allow the reconstruction
of the original hypergraph; for $\mathcal{B_{H}}$ it requires at
least $p_{s}\left(k_{\max}\right)$ check per hyperedge as for $\mathcal{A_{H}}$it
requires only one element per hyperedge.

In both cases, nodes degree can be deduced from the e-adjacency tensor.
$\mathcal{A}_{\mathcal{H}}$ allows the retrieval the structure of
the hypergraph in term of edges cardinality which is not straightforward
in the case of $\mathcal{B_{\mathcal{H}}}$.

The interpretability of $\mathcal{A}_{\mathcal{H}}$ in term of hypergraphs
is possible as it is the e-adjacency tensor of the $V_{s}$-layered
uniform hypergraph $\widehat{\mathcal{H}_{\widehat{w}}}$ obtained
from $\mathcal{H}$. $\mathcal{B}_{\mathcal{H}}$ is not interpretable
in term of hypergraphs, as hyperedges don't allow repetition of vertices.

\begin{table}
\begin{center}%
\begin{tabular}{|>{\centering}p{5cm}|>{\centering}m{3.7cm}|>{\centering}m{3cm}|}
\cline{2-3} 
\multicolumn{1}{>{\centering}p{5cm}|}{} & $\mathcal{B}_{\mathcal{H}}$ & $\mathcal{\mathcal{A_{\mathcal{H}}}}$\tabularnewline
\hline 
Order & $k_{\max}$ & $k_{\max}$\tabularnewline
\hline 
Dimension & $n$ & $n+k_{\max}-1$\tabularnewline
\hline 
Total number of elements & $n^{k_{\max}}$ & $\left(n+k_{\max}-1\right)^{k_{\max}}$\tabularnewline
\hline 
Total number of elements potentially used by the way the tensor is
build & $n^{k_{\max}}$ & $\dfrac{\left(n+k_{\max}-1\right)!}{\left(n-1\right)!}$\tabularnewline
\hline 
Number of non-nul elements for a given hypergraph & $\sum\limits _{s=1}^{k_{\max}}\alpha_{s}\left|E_{s}\right|$ with\newline$\alpha_{s}=p_{s}\left(k_{\max}\right)\dfrac{k_{\max}!}{k_{1}!...k_{s}!}$  & $k_{\max}!\left|E\right|$\tabularnewline
\hline 
Number of repeated elements per hyperedge of size $s$ & $\dfrac{k_{\max}!}{k_{1}!...k_{s}!}$ & $k_{\max}!$\tabularnewline
\hline 
Number of elements to be filled per hyperedge of size $s$ before
permutation & Varying

$p_{s}\left(k_{\max}\right)$ (\ref{Tab: Number of partitions of size s of an integer m}) & Constant

1\tabularnewline
\hline 
Number of elements to be described to derived the tensor by permutation
of indices & $\sum\limits _{s=1}^{k_{\max}}p_{s}\left(k_{\max}\right)\left|E_{s}\right|$ & $\left|E\right|$\tabularnewline
\hline 
Value of elements of a hyperedge & Varying\\
$\dfrac{s}{\alpha_{s}}$ & Constant\\
$\dfrac{1}{\left(k_{\max}-1\right)!}$\tabularnewline
\hline 
Symmetric & Yes & Yes\tabularnewline
\hline 
Reconstructivity & Need computation of duplicated vertices & Straightforward: delete special vertices\tabularnewline
\hline 
Nodes degree & Yes & Yes\tabularnewline
\hline 
Spectral analysis & Yes & Special vertices increase the amplitude of the bounds \tabularnewline
\hline 
Interpretability of the tensor in term of hypergraph & No & Yes\tabularnewline
\hline 
\end{tabular}

\caption{Evaluation of the e-adjacency tensor}

$\mathcal{B}_{\mathcal{H}}$ designates the adjacency tensor defined
in \citet{banerjee2017spectra}

$\mathcal{A_{H}}$ designates the layered e-adjacency tensor as defined
in this article.

\end{center}
\end{table}

\section{Future work and Conclusion}

\label{sec:Future-work-and}

The importance of defining properly the concept of adjacency in a
hypergraph has helped us to build a proper e-adjacency tensor in a
way that allows to contain important information on the structure
of the hypergraph. This work contributes to give a methodology to
build a uniform hypergraph and hence a cubical symmetric tensor from
the different layers of uniform hypergraphs contained in a hypergraph.
The built tensor allows to reconstruct with no ambiguity the original
hypergraph. Nonetheless, first results on spectral analysis show difficulties
to use the tensor built as the additional vertices inflate the spectral
radius bound. The uniformisation process is a strong basis for building
alternative proposals.

\section{Acknowledgments}

This work is part of the PhD of Xavier OUVRARD, done at UniGe, supervised
by Stéphane MARCHAND-MAILLET and founded by a doctoral position at
CERN, in Collaboration Spotting team, supervised by Jean-Marie LE
GOFF.

The authors are really thankful to all the team of Collaboration Spotting:
Adam AGOCS, Dimitris DARDANIS, Richard FORSTER and a special thanks
to Andre RATTINGER for the daily long exchanges we have on our respective
PhD.

\bibliographystyle{plainnat}
\bibliography{/home/xo/cernbox/these/000-thesis_corpus/biblio/references}

\section*{Appendix A\label{sec:Appendix-A}}

Example

Given the following hypergraph: $\mathcal{H}=\left(V,E\right)$ where:
$V=\left\{ v_{1},v_{2},v_{3},v_{4},v_{5},v_{6},v_{7}\right\} $ and
$E=\left\{ e_{1},e_{2},e_{3},e_{4},e_{5},e_{6},e_{7}\right\} $ with:
$e_{1}=\left\{ v_{1},v_{2},v_{3}\right\} $, $e_{2}=\left\{ v_{1},v_{2},v_{7}\right\} $,
$e_{3}=\left\{ v_{6},v_{7}\right\} $, $e_{4}=\left\{ v_{5}\right\} $,
$e_{5}=\left\{ v_{4}\right\} $, $e_{6}=\left\{ v_{3},v_{4}\right\} $
and $e_{7}=\left\{ v_{4},v_{7}\right\} $.

This hypergraph $\mathcal{H}$ is drawn in Figure \ref{Fig: Layers of an hypergraph}.

The layers of $\mathcal{H}$ are:
\begin{itemize}
\item $\mathcal{H}_{1}=\left(V,\left\{ e_{4},e_{5}\right\} \right)$ with
the associated unnormalized tensor: 
\[
\mathcal{A}_{1\,raw}=\left[\begin{array}{ccccccc}
0 & 0 & 0 & 1 & 1 & 0 & 0\end{array}\right]
\]
 and associated homogeneous polynomial: 
\[
P_{1}\left(z_{0}\right)=z_{4}+z_{5}.
\]
More generally, the version with a normalized tensor is: 
\[
P_{1}\left(z_{0}\right)=a_{(1)\,4}z_{4}+a_{(1)\,5}z_{5}
\]
\item $\mathcal{H}_{2}=\left(V,\left\{ e_{3},e_{6},e_{7}\right\} \right)$
with the associated unnormalized tensor: 
\[
\mathcal{A}_{2\,raw}=\left[\begin{array}{ccccccc}
0 & 0 & 0 & 0 & 0 & 0 & 0\\
0 & 0 & 0 & 0 & 0 & 0 & 0\\
0 & 0 & 0 & 1 & 0 & 0 & 0\\
0 & 0 & 1 & 0 & 0 & 0 & 1\\
0 & 0 & 0 & 0 & 0 & 0 & 0\\
0 & 0 & 0 & 0 & 0 & 0 & 1\\
0 & 0 & 0 & 1 & 0 & 1 & 0
\end{array}\right]
\]
 and associated homogeneous polynomial: 
\[
P_{2}\left(z_{0}\right)=2z_{3}z_{4}+2z_{6}z_{7}+2z_{4}z_{7}.
\]
More generally, the version with a normalized tensor is: 
\[
P_{2}\left(z_{0}\right)=2!a_{(2)\,3\,4}z_{3}z_{4}+2!a_{(2)\,6\,7}z_{6}z_{7}+2!a_{(2)\,4\,7}z_{4}z_{7}.
\]
\item $\mathcal{H}_{3}=\left(V,\left\{ e_{1},e_{2}\right\} \right)$ with
the associated unnormalized tensor: \\
\resizebox{.9\hsize}{!}{$\mathcal{A}_{3\,raw}=\left[\left.\begin{array}{ccccccc} 0 & 0 & 0 & 0 & 0 & 0 & 0\\ 0 & 0 & 1 & 0 & 0 & 0 & 1\\ 0 & 1 & 0 & 0 & 0 & 0 & 0\\ 0 & 0 & 0 & 0 & 0 & 0 & 0\\ 0 & 0 & 0 & 0 & 0 & 0 & 0\\ 0 & 0 & 0 & 0 & 0 & 0 & 0\\ 0 & 1 & 0 & 0 & 0 & 0 & 0 \end{array}\right|\left.\begin{array}{ccccccc} 0 & 0 & 1 & 0 & 0 & 0 & 1\\ 0 & 0 & 0 & 0 & 0 & 0 & 0\\ 1 & 0 & 0 & 0 & 0 & 0 & 0\\ 0 & 0 & 0 & 0 & 0 & 0 & 0\\ 0 & 0 & 0 & 0 & 0 & 0 & 0\\ 0 & 0 & 0 & 0 & 0 & 0 & 0\\ 1 & 0 & 0 & 0 & 0 & 0 & 0 \end{array}\right|\left.\begin{array}{ccccccc} 0 & 1 & 0 & 0 & 0 & 0 & 0\\ 1 & 0 & 0 & 0 & 0 & 0 & 0\\ 0 & 0 & 0 & 0 & 0 & 0 & 0\\ 0 & 0 & 0 & 0 & 0 & 0 & 0\\ 0 & 0 & 0 & 0 & 0 & 0 & 0\\ 0 & 0 & 0 & 0 & 0 & 0 & 0\\ 0 & 0 & 0 & 0 & 0 & 0 & 0 \end{array}\right|\left.0\right|\left.0\right|\left.0\right|\left.\begin{array}{ccccccc} 0 & 1 & 0 & 0 & 0 & 0 & 0\\ 1 & 0 & 0 & 0 & 0 & 0 & 0\\ 0 & 0 & 0 & 0 & 0 & 0 & 0\\ 0 & 0 & 0 & 0 & 0 & 0 & 0\\ 0 & 0 & 0 & 0 & 0 & 0 & 0\\ 0 & 0 & 0 & 0 & 0 & 0 & 0\\ 0 & 0 & 0 & 0 & 0 & 0 & 0 \end{array}\right|\right]$}\\
 and associated homogeneous polynomial: 
\[
P_{3}\left(z_{0}\right)=3!z_{1}z_{2}z_{3}+3!z_{1}z_{2}z_{7}
\]
More generally, the version with a normalized tensor is:
\[
P_{3}\left(z_{0}\right)=3!a_{(3)\,1\,2\,3}z_{1}z_{2}z_{3}+3!a_{(3)\,1\,2\,7}z_{1}z_{2}z_{7}.
\]
\end{itemize}
The iterative process of homogenization is then the following using
the degree-normalized adjacency tensor $\mathcal{A}_{k}=\dfrac{1}{(k-1)!}\mathcal{A}_{k\,raw}$
and the normalizing coefficients $c_{k}=\dfrac{k_{\max}}{k}$, with
$k_{\max}=3$
\begin{itemize}
\item $R_{1}\left(\boldsymbol{z_{0}}\right)=\dfrac{3}{1}P_{1}\left(z_{0}\right)$
\item $R_{2}\left(\boldsymbol{z_{1}}\right)=R_{1}\left(z_{0}\right)y_{1}+\dfrac{3}{2}P_{2}\left(z_{0}\right)$
\item $R_{3}\left(\boldsymbol{z_{2}}\right)=R_{2}\left(z_{1}\right)y_{2}+\dfrac{3}{3}P_{3}\left(z_{0}\right)$
\end{itemize}
Hence:

\begin{eqnarray*}
R_{3}\left(\boldsymbol{z_{2}}\right) & = & 3\left(a_{(1)\,4}z_{4}+a_{(1)\,5}z_{5}\right)y_{1}y_{2}\\
 &  & +\dfrac{3}{2}\times2!\left(a_{(2)\,3\,4}z_{3}z_{4}+a_{(2)\,6\,7}z_{6}z_{7}+a_{(2)\,4\,7}z_{4}z_{7}\right)y_{2}\\
 &  & +3!\left(a_{(3)\,1\,2\,3}z_{1}z_{2}z_{3}+a_{(3)\,1\,2\,7}z_{1}z_{2}z_{7}\right).
\end{eqnarray*}

Therefore the e-adjacency tensor of $\mathcal{H}$ is obtained by
writing the corresponding symmetric cubical tensor of order 3 and
dimension 9, described by: $r_{489}=r_{589}=r_{349}=r_{679}=r_{479}=r_{123}=r_{127}=\dfrac{1}{2}$.
The other remaining not null elements are obtained by permutation
on the indices.

Finding the degree of one vertex from the tensor is easily achievable;
for instance $\deg\left(v_{4}\right)=2!\left(r_{489}+r_{349}+r_{479}\right)=3$.
\end{document}